\newif\ifconf
\newcommand{\texorpdfstring}[2]{#1}
\newcounter{Hlemma}
\newcounter{Hremark}
\g@addto@macro\lemma{\stepcounter{Hlemma}}
\g@addto@macro\remark{\stepcounter{Hremark}}
\let\llncs@addcontentsline\addcontentsline
\patchcmd{\maketitle}{\addcontentsline}{\llncs@addcontentsline}{}{}
\patchcmd{\maketitle}{\addcontentsline}{\llncs@addcontentsline}{}{}
\patchcmd{\maketitle}{\addcontentsline}{\llncs@addcontentsline}{}{}
\def\compactify{\itemsep=0pt \topsep=0pt \partopsep=0pt \parsep=0pt}
\DeclareMathOperator{\argmax}{arg\,max}
\DeclareMathOperator {\Var}  {Var}
\DeclareMathOperator {\supp} {supp}
\newcommand {\set}   [1] {\left\{ #1 \right\}}
\newcommand {\brc}   [1] {\left(#1\right)}
\newcommand {\Exp}       {\mathbb{E}}
\newcommand {\Prob}  [1] {\Pr \brc{#1 }}
\newcommand {\E}     [1] {\Exp\left[#1\right]}
\newcommand {\EE}    [2] {\Exp_{#1}\left[#2\right]}
\newcommand {\calC} {{\cal C}}
\newcommand {\calI} {{\cal I}}
\newcommand {\CSPd} {MAX CSP${}_d$}
\newcommand {\kCSP} {MAX $k$-CSP}
\newcommand {\kCSPd} {MAX $k$-CSP${}_d$}
\title{Approximation Algorithm for Non-Boolean MAX $k$-CSP}
\author{Konstantin Makarychev\inst{1} \and Yury Makarychev\inst{2}\thanks{Yury Makarychev is supported in part by 
the NSF Career Award CCF-1150062.}}
\institute{Microsoft Research
\and Toyota Technological Institute at Chicago}
\date{}
\begin{document}
\maketitle

\begin{abstract}
In this paper, we present a randomized polynomial-time approximation algorithm for \kCSPd.
In \kCSPd, we are given a set of predicates of arity $k$ over an alphabet of size $d$.
Our goal is to find an assignment that maximizes the number of satisfied constraints.

Our algorithm has approximation factor $\Omega(kd/d^k)$ (when $k \geq \Omega(\log d)$).
This bound is asymptotically optimal assuming the Unique Games Conjecture.
The best previously known algorithm has approximation factor $\Omega({k\log d}/{d^k})$.

We also give an approximation algorithm for the boolean MAX $k$-CSP${}_2$
problem with a slightly improved approximation guarantee.
\end{abstract}

\section{Introduction}
We design an approximation algorithm for the \kCSPd, the maximum constraint satisfaction problem with $k$-ary
predicates and domain size $d$. In this problem, we are given a set of variables $\{x_u\}_{u\in X}$ and
a set of predicates $\cal P$. Each variable $x_u$ takes values in $[d]=\set{1,\dots,d}$. Each predicate
$P\in {\cal P}$ depends on at most $k$ variables. Our goal is to assign values to variables so as to maximize
the number of satisfied constraints.

There has been a lot of interest in finding the approximability of \kCSPd\ in the complexity community 
motivated by the connection of \kCSPd\ to $k$-bit PCPs. 
Let us briefly overview known results.
Samorodnitsky and Trevisan~\cite{ST00} showed that the boolean MAX $k$-CSP${}_2$
problem cannot be approximated within a factor of  $\Omega({2^{2\sqrt{k}}}/{2^k})$
if $P\neq NP$. Later Engebretsen and Holmerin~\cite{EH05} improved this bound to $\Omega({2^{\sqrt{2k}}}/{2^k})$.
For non-boolean \kCSPd, Engebretsen~\cite{Eng04} proved a hardness result of $2^{O(\sqrt{d})}/d^k$.
Much stronger inapproximability results were obtained assuming the Unique Games Conjecture (UGC).
Samorodnitsky and Trevisan~\cite{ST06} proved the hardness of $O(k/2^k)$ for
the boolean \kCSP${}_2$. Austrin and Mossel~\cite{AM} and, independently, Guruswami and Raghavendra~\cite{GR08}
proved the hardness of $O(kd^2/d^k)$ for non-boolean \kCSPd. Moreover, Austrin and Mossel~\cite{AM} proved 
the hardness of $O(kd/d^k)$ for every $d$ and infinitely many $k$; specifically,
their result holds for $d$ and $k$ such that $k=(d^t -1)/(d-1)$ for some $t\in{\mathbb{N}}$.
Very recently, H{\aa}stad strengthened the result of Austrin and Mossel and showed
the hardness of $O(kd/d^k)$ for every $d$ and $k \geq d$ [private communication].

On the positive side, approximation algorithms for the problem have been developed in a series of papers 
by Trevisan~\cite{Trev98},
Hast~\cite{Hast05}, Charikar, Makarychev and Makarychev~\cite{CMM06}, and 
Guruswami and Raghavendra~\cite{GR08}.
The best currently known algorithm for $k$-CSP$_d$ by  Charikar et al~\cite{CMM06} has
approximation factor of $\Omega({k\log d}/{d^k})$.
Note that a trivial algorithm for \kCSPd\ that just picks a random assignment satisfies each constraint with probability at least $1/d^k$,
and therefore its approximation ratio is $1/d^k$.

The problem is essentially settled in the boolean case. We know that
the optimal approximation factor is $\Theta(k/2^k)$ assuming UGC.
However, best known lower and upper bounds for the non-boolean case
do not match. In this paper, we present an approximation algorithm
for non-boolean \kCSPd\ with approximation factor $\Omega(kd/d^k)$
(for  $k \geq \Omega(\log d)$). This algorithm is asymptotically optimal
assuming UGC --- 
it is within a constant factor of the upper bounds 
of Austrin and Mossel and of H{\aa}stad 
(for $k$ of the form $(d^t -1)/(d-1)$ and for $k \geq d$, respectively).
Our result improves the best previously known
approximation factor of $\Omega(k\log d/d^k)$.


\textbf{Related work} Raghavendra studied a more general MAX CSP(${\cal P}$) problem~\cite{Rag08}. He showed that the optimal approximation factor equals the integrality gap 
of the standard SDP relaxation for the problem (assuming UGC). His result applies in particular to  \kCSPd.
However, the SDP integrality gap of \kCSPd\ is not known.

\textbf{Overview} We use semidefinite programming (SDP) to solve the problem. In our SDP relaxation,
we have an ``indicator vector'' $u_i$ for every variable $x_u$ and value $i$; we also have a ``indicator 
vector'' $z_C$ for every constraint $C$. In the intended solution, $u_i$ is equal to a fixed unit vector $\mathbf{e}$ if
$x_u=i$, and $u_i =0$ if $x_u \neq i$; similarly, $z_C = \mathbf{e}$ if $C$ is satisfied, and $z_C = 0$, otherwise.

It is interesting that the best previously known algorithm for the problem~\cite{CMM06}
did not use this SDP relaxation; rather it reduced the problem to a binary $k$-CSP problem, which  it solved 
in turn using semidefinite programming. The only previously known algorithm~\cite{GR08} that directly rounded
an SDP solution for \kCSPd\ had approximation factor $\Omega\left(\frac{k/d^7}{d^k}\right)$.

One of the challenges of rounding the SDP solution is that vectors $u_i$
might  have different lengths. Consequently, we cannot just use a rounding scheme that projects
vectors on a random direction and then chooses vectors that have largest projections, since this scheme
will choose longer vectors with disproportionately large probabilities. 
To deal with this problem, we first develop a rounding scheme that rounds
\textit{uniform} SDP solutions, solutions in which all vectors are ``short''. Then we construct a randomized
reduction that converts any instance to an instance with a uniform SDP solution.

Our algorithm for the uniform case is very simple. First, we choose a random Gaussian vector $g$.
Then for every $u$, we find $u_i$ that has the largest projection on $g$ (in absolute value), and
let $x_u = i$. However, the analysis of this algorithm is quite different from analyses of 
similar algorithms for other problems: when we estimate the probability that a constraint $C$ is satisfied,
we have to analyze the correlation of all vectors $u_i$ with vector $z_C$ (where $\{u_i\}$ are SDP vectors for variables $x_u$
that appear in $C$, $z_C$ is the SDP vector for $C$), whereas the standard approach would be
to look only at pairwise correlations of vectors $\{u_i\}$; this approach does not work in our case, however,
since vectors corresponding to an assignment that satisfies $C$ may have very small pairwise correlations,
but vectors corresponding to assignments that do not satisfy $C$ may have much larger pairwise correlations.

\begin{remark}
We study the problem only in the regime when $k\geq \Omega(\log d)$. In Theorem~\ref{thm:main}, 
we prove that when
$k = O(\log d)$ our algorithm has approximation factor
$e^{\Omega(k)}/d^k$. However, in this regime, a better approximation factor of $\Omega(d/d^k)$ 
can be obtained by a simple greedy approach.
\end{remark}

\textbf{Other Results} We also apply our SDP rounding technique to the Boolean Maximum CSP Problem.
We give an algorithm that has approximation guarantee ${}\approx 0.62\, k /2^k$ for sufficiently large $k$.
That slightly improves the best previously known guarantee of ${}\approx 0.44\, k/2^k$~\cite{CMM06}.
\ifconf
Due to space limitations, we present this result only in the full version of our paper.
\else
We present this result in Appendix~\ref{sec:boolean-CSP}.
\fi

\section{Preliminaries} \label{sec:prelim}
We apply the approximation
preserving reduction of Trevisan~\cite{Trev98} to transform a general instance of \kCSPd\ to an instance where each 
predicate is a conjunction of terms of the form $x_u = i$. The 
reduction replaces a predicate $P$, which depends on variables $x_{v_1}$, \dots, $x_{v_k}$, with a set 
of clauses 
$$\set{(x_{v_1} = i_1) \wedge \dots \wedge (x_{v_k} = i_k): P(i_1, \dots, i_k) \text{ is true}}.$$
Then it is sufficient to solve the obtained instance.
We refer the reader to~\cite{Trev98} for details. We assume below that each predicate is a clause of
the form $(x_{v_1} = i_1) \wedge \dots \wedge (x_{v_k} = i_k)$.

\begin{definition}[Constraint satisfaction problem] An instance $\calI$ of \CSPd\ consists of 
\begin{itemize}
\item a set of ``indices'' $X$,
\item a set of variables $\{x_u\}_{u \in X}$ (there is one variable $x_u$ for every index $u\in X$),
\item a set of clauses $\calC$.
\end{itemize}
Each variable $x_u$ takes values in the domain $[d] = \{1,\dots, d\}$. Each clause $C\in \calC$ 
is a set of pairs $(u, i)$ where $u\in X$ and $i \in [d]$. 
An assignment $x_u = x_u^*$ satisfies a clause $C$ if for every $(u,i) \in C$, we have $x_u^* = i$.
We assume that no clause $C$ in $\calC$ contains pairs $(u,i)$ and $(u, j)$ with $i\neq j$ (no assignment satisfies such clause).
The length of a clause $C$ is $|C|$. The support of $C$ is $\supp(C) = \set{u: (u,i)\in C}$.

The value of an assignment $x_u^*$ is the number of constraints in $\calC$ satisfied by $x_u^*$. 
Our goal is to find an assignment of maximum value. We denote the value of an optimal assignment by $OPT = OPT(\calI)$.

In the \kCSPd\ problem, we additionally require that all clauses in $\calC$ have length at most~$k$.
\end{definition}

We consider the following semidefinite programming (SDP) relaxation for \CSPd. For every index $u\in X$ and $i\in [d]$,
we have a vector variable $u_i$; for every clause $C$, we have a vector variable~$z_C$.
\begin{align*}
\text{maximize:\ } &\sum_{C\in\calC} \|z_C\|^2\\
\text{subject to} & {}\\
&\sum_{i=1}^d  \|u_i\|^2 \leq 1 &&\quad\text{ for every } u\in X\\
&\langle u_i, u_j \rangle = 0 &&\quad\text{ for every } u\in X, i,j\in [d] \  (i\neq j)\\
&\langle u_i, z_C\rangle = \|z_C\|^2 &&\quad\text{ for every } C\in\calC,\  (u, i)\in C\\
&\langle u_j, z_C\rangle = 0 &&\quad\text{ for every } C\in\calC,\  (u, i)\in C \text{ and } j \neq i\\
\end{align*}
Denote the optimal SDP value by $SDP = SDP(\calI)$.
Consider the optimal solution $x_u^*$ to an instance $\calI$ and the corresponding SDP solution defined as follows, 
\begin{equation*}
u_i = 
\begin{cases}
\mathbf{e}, &\text{ if } x_u^* = i;\\
0, &\text{ otherwise};
\end{cases} \qquad\qquad
z_C = 
\begin{cases}
\mathbf{e}, &\text{ if } C\text{ is satisfied};\\
0, &\text{ otherwise};
\end{cases}
\end{equation*}
where $\mathbf{e}$ is a fixed unit vector. It is easy to see that this is a feasible SDP solution and its value equals $OPT(\calI)$. Therefore,
$SDP(\calI) \geq OPT(\calI)$.

\begin{definition}
We say that an SDP solution is uniform if $\|u_i\|^2\leq 1/d$ for every $u\in X$ and $i\in [d]$.
\end{definition}

\begin{definition}
Let $\xi$ be a standard Gaussian variable with mean $0$ and variance $1$. We denote 
\begin{align*}
\Phi(t) &= \Prob{|\xi| \leq t} = \frac{1}{\sqrt{2\pi}}\int_{-t}^te^{-x^2/2}dx, \text{ and}\\
\bar{\Phi}(t) &= 1 - \Phi(t) = \Prob{|\xi| > t}.
\end{align*}
\end{definition}
We will use the following lemma, which we prove in Appendix.
\begin{lemma}\label{lem:gaussian-scaling}
For every $t > 0$ and $\beta \in (0,1]$ , we have
$$\bar{\Phi}(\beta t) \leq \bar{\Phi}(t)^{\beta^2}.$$
\end{lemma}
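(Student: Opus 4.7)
The plan is to reduce the inequality to the convexity of an auxiliary function of $t^2$. Taking logarithms, the claim becomes $\ln\bar{\Phi}(\beta t)\le\beta^2\ln\bar{\Phi}(t)$. I introduce the change of variable $s=t^2/2$ and the function $W(s):=\ln\bar{\Phi}(\sqrt{2s})$, extended continuously by $W(0)=\ln\bar{\Phi}(0)=0$; in these variables the claim rewrites as $W(\beta^2 s)\le\beta^2 W(s)$ for $\beta\in(0,1]$ and $s>0$. Because $W(0)=0$, this is exactly the statement that the secant slope $(W(s)-W(0))/s=W(s)/s$ is non-decreasing in $s$ on $(0,\infty)$, and by a standard characterisation of convex functions it suffices to show that $W$ is convex on $[0,\infty)$.

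To establish convexity of $W$, I differentiate using $t=\sqrt{2s}$ with $dt/ds=1/t$, which yields $W'(s)=-2\phi(t)/(t\bar{\Phi}(t))$, where $\phi$ is the standard Gaussian density. Differentiating once more and simplifying, the inequality $W''(s)\ge 0$ reduces, after routine algebra, to the one-variable inequality $(t^2+1)\bar{\Phi}(t)\ge 2t\phi(t)$. Writing $\bar{\Phi}(t)=2\Pr[\xi>t]$ for a standard Gaussian $\xi$, this is precisely the classical Mills-ratio lower bound $\Pr[\xi>t]\ge t\phi(t)/(t^2+1)$.

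This Mills-ratio bound is itself standard: a single integration by parts, exploiting the identity $(\phi(x)/x)'=-(x^2+1)\phi(x)/x^2$, rewrites $\Pr[\xi>t]=\int_t^\infty\phi(x)\,dx$ as $t\phi(t)/(t^2+1)+\int_t^\infty 2\phi(x)/(x^2+1)^2\,dx$, whose remaining integral is manifestly non-negative.

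I expect the main obstacle to be the calculus bookkeeping in computing $W''(s)$ and in teasing out the Mills-type inequality as its sign-determining factor from the raw quotient-rule expression. Once that identification is made, both the convexity of $W$ and the original inequality $\bar{\Phi}(\beta t)\le\bar{\Phi}(t)^{\beta^2}$ follow cleanly via the secant-slope characterisation of convex functions and exponentiation.
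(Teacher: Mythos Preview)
Your approach is correct and genuinely different from the paper's. You recast the inequality as convexity of $W(s)=\ln\bar{\Phi}(\sqrt{2s})$ on $[0,\infty)$, compute $W''$, and observe that its sign is governed precisely by the Mills-ratio lower bound $(t^2+1)\bar{\Phi}(t)\ge 2t\phi(t)$; the secant-slope inequality then gives the result in one stroke. The paper instead sets $f(\beta,t)=\bar{\Phi}(\beta t)^{1/\beta^2}$, proves $\partial f(1,t)/\partial\beta>0$ for every $t>0$ by a three-way case analysis (using both Mills-ratio inequalities and numerical checks such as $(1-1/d)^d\ge d^{-1/4}$-style verifications), and finally uses the multiplicative scaling $f(\beta,t)=f(1,\beta t)^{1/\beta^2}$ to propagate the local monotonicity at $\beta=1$ to all $\beta\in(0,1]$. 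Your route is shorter and more transparent: it isolates the Mills-ratio lower bound as the single analytic ingredient and avoids both the case split and the scaling bootstrap. The paper's argument has the minor advantage that it never touches the endpoint $s=0$ (where $W'(0^+)=-\infty$), but your handling via continuity of $W$ at $0$ together with $W''\ge 0$ on $(0,\infty)$ is entirely standard. One small remark: your ``routine algebra'' for $W''$ does indeed collapse to
\[
W''(s)=\frac{2\phi(t)\bigl[(t^2+1)\bar{\Phi}(t)-2t\phi(t)\bigr]}{t^3\,\bar{\Phi}(t)^2},\qquad t=\sqrt{2s},
\]
so the identification you anticipate goes through cleanly.
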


We will also use the following result of \v{S}id\'{a}k~\cite{Sidak}:
\begin{theorem}[\v{S}id\'{a}k~\cite{Sidak}]
\label{thm:sidak}
Let $\xi_1,\ldots, \xi_r$ be Gaussian random variables with mean zero and an arbitrary covariance matrix. Then for any positive
$t_1, \ldots, t_r$,
$$\Prob{|\xi_1| \leq t_1,|\xi_2| \leq t_2, \ldots, |\xi_r| \leq t_r}
\geq \prod_{i=1}^r\Prob{|\xi_i| \leq t_i}.$$
\end{theorem}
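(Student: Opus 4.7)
The plan is to induct on $r$. The base $r=1$ is trivially an equality. For the inductive step, I would condition on $\xi_r$. Writing $\xi_i = \rho_i \xi_r + \eta_i$ for $i<r$ with $\rho_i = \cov(\xi_i,\xi_r)/\Var(\xi_r)$, the vector $\eta := (\eta_1,\dots,\eta_{r-1})$ is a centered jointly Gaussian vector independent of $\xi_r$. Letting $\phi$ denote the density of $\xi_r$, set
\begin{equation*}
g(s) := \Prob{\eta + s\rho \in K},
\qquad \rho := (\rho_1,\dots,\rho_{r-1}),
\qquad K := [-t_1,t_1]\times\cdots\times[-t_{r-1},t_{r-1}].
\end{equation*}
Then $\Prob{|\xi_1|\le t_1,\dots,|\xi_r|\le t_r} = \int_{-t_r}^{t_r} g(s)\phi(s)\,ds$, while $\Prob{|\xi_1|\le t_1,\dots,|\xi_{r-1}|\le t_{r-1}} = \int_{-\infty}^{\infty} g(s)\phi(s)\,ds$.

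The key claim is that $g$ is even and non-increasing in $|s|$. Evenness follows from $\eta \stackrel{d}{=} -\eta$ and $K = -K$: $g(-s) = \Prob{\eta - s\rho \in K} = \Prob{-\eta - s\rho \in -K} = \Prob{\eta + s\rho \in K} = g(s)$. Unimodality in $|s|$ is Anderson's inequality: for a centered Gaussian measure on $\bbR^{r-1}$, a symmetric convex body $K$, and any direction $\rho$, the map $s \mapsto \Prob{\eta + s\rho \in K}$ is maximized at $s=0$ and non-increasing on $[0,\infty)$. This is the nontrivial ingredient; it follows from the Brunn--Minkowski inequality, or equivalently from Prékopa--Leindler applied to the log-concave Gaussian density (which yields log-concavity of $g$, and log-concavity together with evenness implies monotonicity in $|s|$).

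Given this unimodality, restricting the $\phi$-weighted average of $g$ to $[-t_r,t_r]$ can only increase it. Using the bounds $g(s)\ge g(t_r)$ on $[-t_r,t_r]$ and $g(s)\le g(t_r)$ outside together with a one-line manipulation yields
\begin{equation*}
\int_{-t_r}^{t_r}\! g(s)\phi(s)\,ds \;\ge\; \Prob{|\xi_r|\le t_r}\cdot\int_{-\infty}^{\infty}\! g(s)\phi(s)\,ds,
\end{equation*}
which is exactly the ``one-step'' inequality $\Prob{|\xi_1|\le t_1,\dots,|\xi_r|\le t_r} \ge \Prob{|\xi_r|\le t_r}\cdot\Prob{|\xi_1|\le t_1,\dots,|\xi_{r-1}|\le t_{r-1}}$. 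Applying the inductive hypothesis to the $(r-1)$-variable factor on the right finishes the induction.

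The main obstacle is Anderson's unimodality result. Everything else is a routine conditioning-and-averaging argument, but the fact that the mass a centered Gaussian assigns to a symmetric convex body strictly decreases as the body is translated away from the origin is a genuine piece of convex geometry that requires Brunn--Minkowski (or an equivalent log-concavity tool). If one wishes to avoid invoking Anderson as a black box, one can instead prove directly that the joint density of $(x,s)\mapsto$ (density of $\eta+s\rho$ at $x$) is log-concave on $\bbR^{r-1}\times\bbR$ and then deduce log-concavity of $g(s)=\int_K \text{density}(x-s\rho)\,dx$ via Prékopa--Leindler, obtaining the needed monotonicity.
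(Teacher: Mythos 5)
The paper does not prove Theorem~\ref{thm:sidak}; it states \v{S}id\'{a}k's inequality as an external result and cites~\cite{Sidak}, so there is no internal proof to compare against. Your argument is a correct reconstruction of the classical proof: the regression decomposition $\xi_i = \rho_i\xi_r + \eta_i$ reduces the inductive step to studying $g(s) = \Prob{\eta + s\rho \in K}$ for the symmetric box $K$; evenness of $g$ is the symmetry observation you give; monotonicity of $g$ in $|s|$ is Anderson's theorem (equivalently, log-concavity of $g$ obtained from Pr\'ekopa--Leindler applied to the log-concave Gaussian density, combined with evenness); and the one-step inequality $\int_{-t_r}^{t_r} g(s)\phi(s)\,ds \geq \bigl(\int_{-t_r}^{t_r}\phi(s)\,ds\bigr)\int_{\bbR} g(s)\phi(s)\,ds$ then follows from the elementary comparison $g(s)\geq g(t_r)$ on $[-t_r,t_r]$ and $g(s)\leq g(t_r)$ outside it. Induction closes the argument. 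Two housekeeping points worth adding: dispose of the degenerate case $\Var(\xi_r)=0$ separately, since then $\phi$ and the regression coefficients $\rho_i$ are undefined, but $|\xi_r|\leq t_r$ holds almost surely and the claim reduces trivially to the $(r-1)$-variable case; and Anderson's theorem, though usually stated for absolutely continuous symmetric unimodal densities, extends to singular Gaussians by the same Brunn--Minkowski/log-concavity reasoning, so a degenerate covariance matrix causes no trouble. You are right that Anderson's unimodality result is the only genuinely nontrivial ingredient; the rest is routine averaging. This is essentially the route the original reference takes, which is why the paper reasonably states the theorem without reproducing its proof.
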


\section{Rounding Uniform SDP Solutions}
In this section, we present a rounding scheme for uniform SDP solutions.
\begin{lemma}\label{lem:uniform-rounding}
There is a randomized polynomial-time algorithm that given an instance $\calI$ of the \CSPd\ problem (with $d\geq 57$) and a uniform SDP solution,
outputs an assignment $x_u$ such that for every clause $C \in \calC$:
$$\Prob{C \text{ is satisfied by } x_u} \geq \frac{\min(\|z_C\|^2 |C| d/8, e^{|C|})}{2d^{|C|}}.$$
\end{lemma}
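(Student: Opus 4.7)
The plan is to analyze the Gaussian-argmax rounding described in the overview: sample a standard Gaussian vector $g$ and set $x_u=\argmax_i|\langle u_i,g\rangle|$. Fix a clause $C=\{(v_\ell,i_\ell)\}_{\ell=1}^{|C|}$; the case $z_C=0$ is trivial, so assume $z_C\ne 0$ and write $\hat z=z_C/\|z_C\|$. Decompose $g=\gamma\hat z+g'$, where $\gamma=\langle g,\hat z\rangle\sim\calN(0,1)$ is independent of $g'\perp\hat z$. The SDP constraints give $u_{i_\ell}=\|z_C\|\hat z+u_{i_\ell}^\perp$ with $u_{i_\ell}^\perp\perp\hat z$, and $u_j\perp\hat z$ for every $j\ne i_\ell$ at variable $u=v_\ell$. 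Uniformity then forces $\|u_{i_\ell}^\perp\|^2\le 1/d-\|z_C\|^2$ and $\|u_j\|^2\le 1/d$, so each of the $g'$-projections $\langle u_{i_\ell}^\perp,g'\rangle$ and $\langle u_j,g'\rangle$ has variance at most $1/d$, and $\langle u_{i_\ell},g\rangle=\|z_C\|\gamma+\langle u_{i_\ell}^\perp,g'\rangle$ while $\langle u_j,g\rangle=\langle u_j,g'\rangle$ for $j\ne i_\ell$.

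Next I introduce a parameter $T>0$ and the good event $G_T$: $\|z_C\|\,|\gamma|>2T$, together with $|\langle u_i,g'\rangle|<T$ for every $u\in\supp(C)$ and every $i\in[d]$ (interpreting $\langle u_{i_\ell},g'\rangle$ as $\langle u_{i_\ell}^\perp,g'\rangle$). On $G_T$, the signal dominates the noise: $|\langle u_{i_\ell},g\rangle|\ge\|z_C\|\,|\gamma|-T>T>|\langle u_j,g\rangle|$, so $i_\ell$ is the argmax at each $v_\ell$ and $C$ is satisfied. Independence of $\gamma$ and $g'$, together with \v{S}id\'{a}k's theorem applied to the at most $|C|d$ centered Gaussian projections of $g'$ (each of variance $\le 1/d$), yields
\[
\Prob{G_T}\ \ge\ \bar{\Phi}\!\left(\frac{2T}{\|z_C\|}\right)\,\Phi(T\sqrt d)^{|C|d}.
\]

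The main work, and the main obstacle, is to pick $T$ so that this lower bound matches the claim. Observe that Cauchy--Schwarz applied to $\langle u_{i_\ell},z_C\rangle=\|z_C\|^2$ together with uniformity forces $\|z_C\|^2\le 1/d$, hence $\|z_C\|^2|C|d/8\le |C|/8\le e^{|C|}$, so the $\min$ in the statement is always attained by the first term and the real target is $\|z_C\|^2|C|d/(16\,d^{|C|})$. Set $s=T\sqrt d$ and $\alpha=\|z_C\|^2d\in(0,1]$, and apply Lemma~\ref{lem:gaussian-scaling} with $\beta=\|z_C\|\sqrt d/2\in(0,1]$ to the pair $(t,\beta t)=(2T/\|z_C\|,\,s)$: this converts the awkward factor into a power of $\bar{\Phi}(s)$, giving $\bar{\Phi}(2T/\|z_C\|)\ge\bar{\Phi}(s)^{4/\alpha}$ and reducing the problem to the one-variable expression $\bar{\Phi}(s)^{4/\alpha}(1-\bar{\Phi}(s))^{|C|d}$. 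I would choose $s$ so that $\bar{\Phi}(s)\approx c/(|C|d)$, making $\Phi(s)^{|C|d}=\Omega(1)$, and then check that $\bar{\Phi}(s)^{4/\alpha}=\Omega(\|z_C\|^2|C|d/d^{|C|})$ across all regimes of $\alpha$. The delicate step is the small-$\alpha$ regime, where the exponent $4/\alpha$ is large and naive estimates let $\bar{\Phi}(s)^{4/\alpha}$ collapse; the hypothesis $d\ge 57$ (and, behind the scenes, $k\gtrsim\log d$) ought to provide just enough slack between $|C|d$ and $d^{|C|}$ for the optimized choice of $s$ to close the gap.
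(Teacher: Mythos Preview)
Your setup is essentially the same as the paper's: the decomposition $u_{i_\ell}=z_C+u_{i_\ell}^\perp$, the good event $G_T$, \v{S}id\'{a}k's theorem, and the use of Lemma~\ref{lem:gaussian-scaling} are all right. The genuine gap is in the algorithm itself, and precisely in the ``delicate step'' you flag at the end.

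You propose only the Gaussian argmax rounding. That is not enough: when $\|z_C\|$ is small the Gaussian scheme can satisfy $C$ with probability far below the target, and no choice of $T$ (equivalently, of $s$) rescues your bound $\bar{\Phi}(s)^{4/\alpha}\Phi(s)^{|C|d}$. Concretely, take any clause $C$ and a uniform SDP solution in which, for each $u\in\supp(C)$, the ``correct'' vector is $u_{i_\ell}=z_C$ (so $\|u_{i_\ell}\|=\|z_C\|$) while some other $u_j$ has $\|u_j\|^2=1/d$ and is orthogonal to everything else. Conditioning on $\gamma_C=\langle z_C,g\rangle$, the probability that $i_\ell$ wins the argmax at $u$ is $\Phi(|\gamma_C|\sqrt{d})$; across the $|C|$ variables and then averaging over $\gamma_C\sim\calN(0,\|z_C\|^2)$, the success probability is $\E\bigl[\Phi(|\gamma_C|\sqrt d)^{|C|(d-1)}\bigr]=\Theta\bigl((\|z_C\|\sqrt d)^{|C|(d-1)}\bigr)$ as $\|z_C\|\to 0$. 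For $|C|(d-1)>2$ this is $o(\|z_C\|^2)$, so it drops below $\|z_C\|^2|C|d/(16\,d^{|C|})$. In your own expression, with $\alpha=\|z_C\|^2 d\to 0$ the optimum of $(1-q)^{4/\alpha}q^{|C|d}$ over $q=\Phi(s)$ is attained at $q\approx |C|d\,\alpha/4$ and equals $\Theta\bigl((|C|d\,\alpha)^{|C|d}\bigr)$, which decays like $\|z_C\|^{2|C|d}$ --- hopelessly smaller than $\|z_C\|^2$. The hypothesis $d\ge 57$ is irrelevant here; it only controls $(1-1/d)^d$ in the large-$\|z_C\|$ analysis.

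The paper's fix is simple but essential: the algorithm also, with probability $1/2$, outputs a uniformly random assignment. That branch satisfies $C$ with probability exactly $d^{-|C|}$, which already meets the first term of the $\min$ whenever $\|z_C\|^2\le 8/(|C|d)$. The Gaussian branch is then only analyzed under the complementary assumption $\|z_C\|^2\ge 8/(|C|d)$ (which in particular forces $|C|\ge 8$), and there the paper chooses the threshold so that $\bar{\Phi}(\sqrt{|C|d/8}\,M)=d^{-|C|/2}$ and obtains $\Prob{C\text{ satisfied}}\ge (1-1/d)^{|C|d}d^{-|C|/2}\ge d^{-3|C|/4}$. Your observation that the $\min$ is always attained by the first term (since $\|z_C\|^2\le 1/d$) is correct, but it does not remove the need for the random-assignment branch; it only means the paper's $e^{|C|}$ bound in the large-$\|z_C\|$ case is stronger than strictly necessary.
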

\begin{proof}
We use the following rounding algorithm:

\rule{0pt}{12pt}
\hrule height 0.8pt
\rule{0pt}{1pt}
\hrule height 0.4pt
\rule{0pt}{6pt}

\noindent \textbf{Rounding Scheme for Uniform SDP solutions}
\medskip

\noindent \textbf{Input:} an instance of the \CSPd\ problem and a uniform SDP solution.

\noindent \textbf{Output:} an assignment $\{x_u\}$.
\begin{itemize}\compactify
\item Choose a random Gaussian vector $g$ so that every component of $g$ is distributed as a Gaussian variable with mean 0 and variance 1,
and all components are independent.
\item For every $u\in V$, let $x_u'= \argmax_i |\langle u_i, g\rangle|$. 
\item For every $u\in V$, choose $x_u''$ uniformly at random from $[d]$ (independently for different $u$).
\item With probability $1/2$ return assignment $\set{x_u'}$; with probability $1/2$ return assignment $\set{x_u''}$.
\end{itemize}
\rule{0pt}{1pt}
\hrule height 0.4pt
\rule{0pt}{1pt}
\hrule height 0.8pt
\rule{0pt}{12pt}

For every clause $C$, let us estimate the probabilities that assignments $x'_u$ and $x''_u$ satisfy $C$.
It is clear that $x_u''$ satisfies $C$ with probability $d^{-|C|}$. We prove now that 
$x_u'$ satisfies $C$ with probability at least $d^{-3|C|/4}$ if $\|z\|_C^2 \geq 8/(|C|d)$.
\begin{claim}
Suppose $C \in \calC$ is a clause such that $\|z\|_C^2 \geq 8/(|C|d)$ and $d\geq 57$. Then
the probability that the assignment $x_u'$ satisfies $C$ is at least $d^{-3|C|/4}$.
\end{claim}
\begin{proof}
Denote $s = |C|$.
We assume without loss of generality that for every $u\in \supp(C)$, $(u, 1) \in C$.
Note that for $(u,i)\in C$, 
we have $\|z_C\|^2 = \langle z_C, u_i\rangle \leq \|z_C\| \cdot \|u_i\| \leq \|z_C\|/\sqrt{d}$
(here we use that the SDP solution is uniform and therefore $\|u_i\|^2 \leq 1/d$).
Thus $\|z_C\|^2 \leq 1/d$. In particular, $s=|C| \geq 8$ since  $\|z\|_C^2 \geq 8/(|C|d)$.

For every $u \in \supp(C)$, let $u_1^{\perp} = u_1 - z_C$. 
Let $\gamma_{u,1} = \langle g, u_1^\perp\rangle$ and
$\gamma_{u,i} = \langle g, u_i\rangle$ for $i \geq 2$. Let $\gamma_C = \langle g, z_C\rangle$.
All variables $\gamma_{u,i}, \gamma_C$ are Gaussian variables.
Using that for every two vectors $v$ and $w$,
$\E{\langle g, v\rangle \cdot \langle g, w\rangle} = \langle v, w\rangle$, we get
%
\begin{align*}
\E{\gamma_C \cdot \gamma_{u,1}} &= \langle z_C, u_1 - z_C\rangle = \langle z_C, u_1\rangle - \|z_C\|^2 = 0;\\
\E{\gamma_C \cdot \gamma_{u,i}} &= \langle z_C, u_i\rangle = 0 \quad \text{for } i\geq 2.
\end{align*}
Therefore, all variables $\gamma_{u,i}$ are independent from $\gamma_C$. (However, for $u', u'' \in \supp(C)$
variables $\gamma_{u',i}$ and $\gamma_{u'',j}$ are not necessarily independent.)
Let $M = {\bar \Phi}^{-1}(1/d^{s/2}) / \sqrt{sd/8}$.
We write the probability that $x_u'$ satisfies $C$,
\ifconf
\begin{align*}
  \Pr&({x_u' \text{ satisfies } C}) = \Pr\bigl(\argmax_i |\langle g, u_i\rangle| = 1 \text{ for every } u \in \supp(C)\bigr)\\
&= \Prob{|\langle g, u_1\rangle| > |\langle g, u_i\rangle| \text{ for every } u \in \supp(C),i\in\set{2,\dots,d}}\\
&= \Prob{|\gamma_{u,1} + \gamma_C| > |\gamma_{u,i}| \text{ for every } u \in \supp(C), i\in\set{2,\dots,d}}\\
&\geq \Pr(|\gamma_{u,1}| \leq M/2, \text{ and } |\gamma_{u,i}| \leq M/2 \\
&\phantom{{}={}{}={}} \text{ for every } u \in \supp(C), i\in\set{2,\dots,d} \; \bigl| \; |\gamma_C| > M\bigr.)
\cdot \Prob{|\gamma_C| > M}.
\end{align*}
\else
\begin{align*}
  \Prob{x_u' \text{ satisfies } C} &= \Pr\bigl(\argmax_i |\langle g, u_i\rangle| = 1 \text{ for every } u \in \supp(C)\bigr)\\
&= \Prob{|\langle g, u_1\rangle| > |\langle g, u_i\rangle| \text{ for every } u \in \supp(C),i\in\set{2,\dots,d}}\\
&= \Prob{|\gamma_{u,1} + \gamma_C| > |\gamma_{u,i}| \text{ for every } u \in \supp(C), i\in\set{2,\dots,d}}\\
&\geq \Pr(|\gamma_{u,1}| \leq M/2, \text{ and } |\gamma_{u,i}| \leq M/2 \\
&\phantom{{}={}{}={}} \text{ for every } u \in \supp(C), i\in\set{2,\dots,d} \; \bigl| \; |\gamma_C| > M\bigr.)
\cdot \Prob{|\gamma_C| > M}.
\end{align*}
\fi

Since all variables $\gamma_{u,i}$ are independent from $\gamma_C$, 
\ifconf 
\begin{multline*}
\Prob{x_u' \text{ satisfies } C} \geq\\
\Prob{|\gamma_{u,i}| \leq M/2 
\text{ for every } u \in \supp(C), i\in\set{1,\dots,d}} \cdot \Prob{|\gamma_C| > M}.
\end{multline*}
\else
$$\Prob{x_u' \text{ satisfies } C} \geq 
\Prob{|\gamma_{u,i}| \leq M/2 \text{ for every } u \in \supp(C), i\in\set{1,\dots,d}} \cdot \Prob{|\gamma_C| > M}.$$
\fi
By \v{S}id\'{a}k's Theorem (Theorem~\ref{thm:sidak}), we have
\begin{equation}\label{eq:main}
\Prob{x_u' \text{ satisfies } C} \geq 
\Bigl(\prod_{u\in\supp(C)}\prod_{i=1}^d \Prob{|\gamma_{u,i}| \leq M/2}\Bigr) \cdot \Prob{|\gamma_C| > M}.
\end{equation}
We compute the variance of vectors $\gamma_{u,i}$. 
We use that $\Var[\langle g, v\rangle] = \|v\|^2$ for every vector $v$ and that the SDP solution is uniform.
\ifconf
\begin{align*}
\Var[\gamma_{u,1}] &= \|u_1^{\perp}\|^2 = \|u_1 - z_C\|^2 = \|u_1\|^2 - 2\langle u_1, z_C\rangle + \|z_C\|^2 \\
&{}= \|u_1\|^2 - \|z_C\|^2 \leq \|u_1\|^2 \leq 1/d;\\
\Var[\gamma_{u,i}] &= \|u_i\|^2 \leq 1/d \quad\text{ for } i\geq 2.
\end{align*}
\else
\begin{align*}
\Var[\gamma_{u,1}] &= \|u_1^{\perp}\|^2 = \|u_1 - z_C\|^2 = \|u_1\|^2 - 2\langle u_1, z_C\rangle + \|z_C\|^2 = \|u_1\|^2 - \|z_C\|^2 \leq \|u_1\|^2 \leq 1/d;\\
\Var[\gamma_{u,i}] &= \|u_i\|^2 \leq 1/d \quad\text{ for } i\geq 2.
\end{align*}
\fi
Hence since $\Phi(t)$ is an increasing function and ${\bar\Phi}(\beta t) \leq {\bar\Phi}(t)^{\beta^2}$ (by Lemma~\ref{lem:gaussian-scaling}),
we have
\begin{align*}
\Prob{|\gamma_{u,i}| \leq M/2} &= \Phi(M/(2\sqrt{\Var[\gamma_{u,i}]})) \geq \Phi(\sqrt{d}M/2) = 1 - {\bar\Phi}(\sqrt{d}\, M/2) \\
&\geq 1 - {\bar\Phi}(\sqrt{sd/8}\, M)^{2/s} = 1 - (d^{-s/2})^{2/s} = 1 - d^{-1}
\end{align*}
(recall that we defined $M$ so that ${\bar\Phi}(\sqrt{sd/8}\, M) = d^{-s/2}$).
Similarly, $\Var[\gamma_C] = \|z_C\|^2 \geq 8/(sd)$ (by the condition of the lemma). We get 
(using the fact that ${\bar\Phi}(t)$ is a decreasing function),
$$\Prob{|\gamma_C| > M} = {\bar\Phi}(M/\sqrt{\Var[\gamma_C]}) \geq {\bar\Phi}(M\sqrt{sd/8}) = d^{-s/2}.$$

Plugging in bounds for $\Prob{|\gamma_{u,i}| \leq M/2}$ and $\Prob{|\gamma_C| > M}$ into (\ref{eq:main}), we obtain
$$
\Prob{x_u' \text{ satisfies } C} \geq 
(1 - d^{-1})^{ds} d^{-s/2} \geq d^{-3s/4}.
$$
Here, we used that $(1 - d^{-1})^d \geq d^{-1/4}$ for $d\geq 57$ (the inequality $(1 - d^{-1})^d \geq d^{-1/4}$ holds for $d\geq57$
since it holds for $d=57$ and the left hand side, $(1 - d^{-1})^d$, is an increasing function, the right hand side, $d^{-1/4}$, is a decreasing function).
\qed
\end{proof}

We conclude that if $\|z_C\|^2 \leq 8/(|C|d)$ then the algorithm chooses assignment $x''_u$ with probability $1/2$ and this 
assignment satisfies $C$ with probability at least $1/d^{|C|} \geq \|z_C\|^2 \, |C|\, d / (8 \, d^{|C|})$. So
$C$ is satisfied with probability at least, $1/d^{|C|} \geq \|z_C\|^2 \, |C|\, d / (16 \, d^{|C|})$; if $\|z_C\|^2 \geq 8/(|C|d)$ then
the algorithm chooses assignment $x'$ with probability $1/2$ and this assignment satisfies $C$ with probability at least 
$d^{-3|C|/4} \geq e^{|C|}/d^{|C|}$ (since $e \leq 57^{1/4} \leq d^{1/4}$).
In either case, 
$$\Prob{C \text{ is satisfied}} \geq \frac{\min(\|z_C\|^2 |C| d/8, e^{|C|})}{2d^{|C|}}.$$
\qed
\end{proof}
\begin{remark}
We note that we did not try to optimize all constants in the statement of Lemma~\ref{lem:uniform-rounding}.
By choosing all parameters in our proof appropriately, it is possible to show that for every constant $\varepsilon > 0$, there is
a randomized rounding scheme, $\delta > 0$ and $d_0$ such that for every instance of \CSPd\ with $d \geq d_0$ the probability 
that each clause $C$ is satisfied is at least $\min((1-\varepsilon)\|z_C\|^2 \cdot |C|\, d, \delta \cdot e^{\delta |C|})/ d^{|C|}$.
\end{remark}
\section{Rounding Arbitrary SDP Solutions}
In this section, we show how to round an arbitrary SDP solution.
\begin{lemma}\label{lem:non-uniform-rounding}
There is a randomized polynomial-time algorithm that given an instance $\calI$ of the \CSPd\ problem (with $d\geq 113$)
and an SDP solution,
outputs an assignment $x_u$ such that for every clause $C \in \calC$:
$$\Prob{C \text{ is satisfied by } x_u} \geq 
\frac{\min(\|z_C\|^2 |C| d/64, 2e^{|C|/8})}{4d^{|C|}}.$$
\end{lemma}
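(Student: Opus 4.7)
The plan is to reduce the general case to the uniform case handled by Lemma~\ref{lem:uniform-rounding} via a randomized preprocessing that ``flattens'' the SDP solution. Call an index $i$ \emph{heavy} at $u$ if $\|u_i\|^2 > 1/d$ and \emph{light} otherwise; since $\sum_i\|u_i\|^2 \le 1$, at most $d-1$ indices are heavy at each $u$, and the total excess $\sum_i(\|u_i\|^2-1/d)_+$ is at most $1$.

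I would design two independent rounding schemes and output each with probability $1/2$. The first, \textbf{direct sampling}, is intended to cover clauses containing a heavy vector: for each $u$ independently, sample $x_u=i$ with probability proportional to $\max(\|u_i\|^2,1/d)$. Since $\sum_i\max(\|u_i\|^2,1/d) \le 2$, we have $\Pr[x_u=i] \ge \max(\|u_i\|^2,1/d)/2$, and so a clause $C$ is satisfied with probability at least $2^{-|C|}\prod_{(u,i)\in C}\max(\|u_i\|^2,1/d)$. The second scheme, \textbf{truncated uniform rounding}, is intended for all-light clauses: replace every heavy $u_i$ by the rescaled vector $\hat u_i := u_i/(\sqrt d\,\|u_i\|)$ so that $\|\hat u_i\|^2 \le 1/d$, and leave light $u_i$'s and the vectors $z_C$ unchanged. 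Orthogonality $\langle \hat u_i,\hat u_j\rangle=0$ and the norm bound $\sum_i\|\hat u_i\|^2 \le 1$ are preserved by construction; for a clause $C$ whose vectors are all light, no truncation touches the vectors appearing in $C$, so the SDP relations involving $z_C$ continue to hold exactly, and Lemma~\ref{lem:uniform-rounding} applies directly, delivering probability $\min(\|z_C\|^2|C|d/8,e^{|C|})/(2d^{|C|})$.

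For each clause $C$, I would take the better of the two bounds. When every $(u,i)\in C$ is light, the second scheme combined with the factor $1/2$ from the mixture yields the target $\min(\|z_C\|^2|C|d/64,2e^{|C|/8})/(4d^{|C|})$. When some $(u,i)\in C$ is heavy, I would fall back on the first scheme and use $\|u_i\|^2 \ge \|z_C\|^2$ (which follows from $\langle u_i,z_C\rangle=\|z_C\|^2$ and Cauchy--Schwarz) together with heaviness $\|u_i\|^2 > 1/d$ to lower-bound each factor in the product by $\max(\|z_C\|^2,1/d)$, yielding a bound of the claimed form after a short calculation.

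The hard part will be handling \emph{mixed} clauses, where $C$ contains both heavy and light $(u,i)$: the direct sampling loses a factor of roughly $d$ for every light coordinate, while the truncated rounding loses on every heavy one because rescaling can shrink the coupling $\langle \hat u_i, z_C\rangle$ below $\|z_C\|^2$. Tracking the constants $8 \to 64$ and $e^{|C|} \to 2e^{|C|/8}$ through these losses, and strengthening the hypothesis from $d\ge 57$ to $d\ge 113$ to absorb them, is where most of the technical work will lie. A cleaner alternative I would explore is a single preprocessing step that randomly ``fixes'' each $u$ to some heavy value with probability roughly $\sum_{i\in L_u}(\|u_i\|^2-1/d)$ and otherwise leaves it to uniform rounding on the truncated SDP solution; conditional on the pre-assignment outcomes, the residual instance has a genuinely uniform SDP solution, so Lemma~\ref{lem:uniform-rounding} can be invoked just once and the contributions of heavy and light parts combined cleanly.
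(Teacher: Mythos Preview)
Your two-scheme plan does not close. Take a clause $C$ with, say, one heavy coordinate and $|C|-1$ light ones, with $\|z_C\|^2$ close to $1/d$. Your direct sampling gives
\[
\Pr[C\text{ satisfied}]\ \ge\ 2^{-|C|}\prod_{(u,i)\in C}\max(\|u_i\|^2,1/d)\ \approx\ (2d)^{-|C|},
\]
which is off from the target $\Theta(|C|/d^{|C|})$ by a factor of $2^{|C|}$. Your ``short calculation'' lower-bounding each factor by $\max(\|z_C\|^2,1/d)$ is correct, but the resulting $2^{-|C|}\max(\|z_C\|^2,1/d)^{|C|}$ is simply of the wrong order; no bookkeeping of constants will recover the $\min$ in the lemma from it. And truncated rounding does not help here either: once the heavy $u_i$ is rescaled, the constraint $\langle \hat u_i,z_C\rangle=\|z_C\|^2$ is destroyed, so Lemma~\ref{lem:uniform-rounding} no longer applies to $C$. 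So neither scheme covers mixed clauses, exactly as you suspected, and the gap is not a matter of constants.

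The ``cleaner alternative'' you sketch at the end is the right move, and is essentially what the paper does --- but with two twists you are missing. First, the paper partitions values by \emph{rank}, not by the threshold $1/d$: $S_u$ is the set of the $\lceil d/2\rceil$ shortest vectors and $L_u$ the rest. This guarantees $\|u_i\|^2\le 1/\lceil d/2\rceil$ for $i\in S_u$ (Claim~\ref{claim:get_uniform}) and, crucially, makes $|S_u|$ the same for every $u$, so restricting to the $S_u$'s yields a genuine MAX CSP${}_{d'}$ sub-instance with $d'=\lceil d/2\rceil$ and a uniform SDP solution. Second, the paper does \emph{not} try to make one scheme cover all clauses. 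It splits on $r(C)=|\{(u,i)\in C:i\in S_u\}|$. When $r(C)\ge |C|/4$, it pre-assigns each $u$ uniformly in $L_u$ with probability $|L_u|/d$ and leaves it unassigned otherwise; a clause survives when its $L_u$-coordinates are hit and its $S_u$-coordinates are left free, with probability $\lceil d/2\rceil^{r(C)}/d^{|C|}$, and then Lemma~\ref{lem:uniform-rounding} is applied to the residual clause $C'$ of length $r(C)$ over alphabet of size $d'$ --- the $d'^{\,r(C)}$ denominator exactly cancels the survival probability. When $r(C)\le |C|/4$, the paper abandons the SDP entirely and uses a \emph{biased} random assignment (probability $3/4$ uniform on $L_u$, probability $1/4$ uniform on $S_u$); because at least three quarters of $C$'s coordinates lie in the half-size set $L_u$, each such coordinate gets a boost of roughly $3/2$, yielding $\ge e^{|C|/8}/d^{|C|}$. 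This biased scheme is what replaces your direct sampling and is what avoids the $2^{-|C|}$ loss.
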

\begin{proof}
For every index $u$, we sort all vectors $u_i$ according to their length. Let $S_u$ be the indices of $\lceil d/2\rceil$ 
shortest vectors among $u_i$, and $L_u = [d]\setminus S_u$  be the indices of $\lfloor d/2\rfloor$ 
longest vectors among $u_i$ (we break ties arbitrarily).
For every clause $C$ let $r(C) = |\set{(u,i)\in C: i\in S_u}|$.

\begin{claim}\label{claim:get_uniform}
 For every $i\in S_u$, we have $\|u_i\|^2 \leq 1/|S_u|$.
\end{claim}
\begin{proof}
Let $i\in S_u$. Note that $\|u_i\|^2 + \sum_{j\in L_u} \|u_j\|^2 \leq 1$ (this follows from SDP constraints). There are at least $\lceil d/2\rceil$
terms in the sum, and  $\|u_i\|^2$ is the smallest among them (since $i\in S_u$). Thus
$\|u_i\|^2 \leq 1/\lceil d/2\rceil = 1/|S_u|$.
\qed
\end{proof}

We use a combination of two rounding schemes: one of them works well on clauses $C$ with $r(C) \geq |C|/4$,
the other on clauses $C$ with $r(C) \leq |C|/4$.

\begin{lemma}\label{lem:reduction}
There is a polynomial-time randomized rounding algorithm 
that given an \CSPd\ instance $\calI$ with $d\geq 113$ outputs an assignment $x_u$ such that every 
clause $C$ with $r(C) \geq |C|/4$ is satisfied with probability at least
$$\frac{\min(\|z_C\|^2\, |C|\, d/64, e^{|C|/4})}{2d^{|C|}}.$$
\end{lemma}
\begin{proof}
We will construct a sub-instance $\calI'$ with a uniform SDP solution and then solve $\calI'$ using Lemma~\ref{lem:uniform-rounding}.
To this end, we first construct a partial assignment $x_u$.
For every $u\in X$, with probability $|L_u|/d = \lfloor d/2\rfloor /d$, we assign a value to $x_u$ uniformly 
at random from $L_u$; with probability $1 - |L_u|/d = |S_u| /d$, we do not assign any value to $x_u$.
Let $A =\set{u: x_u \text{ is assigned}}$. Let us say that a clause $C$ \textit{survives} 
the partial assignment step if for every $(u,i) \in C$ either $u\in A$ and $i = x_u$, or $u\notin A$ and $i \in S_u$.

The probability that a clause $C$ survives is 
\begin{align*}
\prod_{(u,i) \in C, i\in L_u}  & \Prob{x_u \text{ is assigned value } i}  \prod_{(u,i) \in C, i\in S_u}
\Prob{x_u \text{ is unassigned}} = \\
& \left(\frac{\lfloor d/2\rfloor}{d} \cdot \frac{1}{\lfloor d/2\rfloor}\right)^{|C| - r(C)}
\cdot 
\left(\frac{\lceil d/2\rceil}{d}\right)^{r(C)} = 
\frac{\lceil d/2\rceil^{r(C)}}{d^{|C|}}
.
\end{align*}

For every survived clause $C$, let $C' = \set{(u,i): u\notin A}$.
Note that for every $(u,i)\in C'$, we have $i\in S_u$.
We get a sub-instance $\calI'$ of our problem on the set of unassigned variables $\set{x_u: u\notin A}$
with the set of clauses $\set{C': C\in\calC \text{ survives}}$. The length of each clause $C'$
equals $r(C)$. In sub-instance $\calI'$, we require that each variable $x_u$ takes values in $S_u$.
Thus $\calI'$ is an instance of MAX CSP${}_{d'}$ problem with $d' = |S_u| = \lceil d/2\rceil$.

Now we transform the SDP solution for $\calI$ to an SDP solution for $\calI'$: 
we let $z_{C'} = z_C$ for survived clauses $C$,  remove vectors $u_i$ for all $u\in A$, $i\in[d]$
and remove vectors $z_C$ for non-survived clauses $C$.
By Claim~\ref{claim:get_uniform}, this SDP solution is a uniform solution for $\calI'$ 
(i.e. $\|u_i\| \leq 1/d'$ for every $u\notin A$ and $i\in S_i$; note that $\calI'$ has 
alphabet size $d'$). We run the rounding algorithm from 
Lemma~\ref{lem:uniform-rounding}. The algorithm assigns values to unassigned variables $x_u$.
For every \textit{survived} clause $C$, we get
\ifconf
\begin{align*}
\Prob{C \text{ is satisfied by } x_u} 
  &{}= \Prob{C' \text{ is satisfied by } x_u} \\
  &{}\geq \frac{\min(\|z_C\|^2 |C'| d'/8, e^{|C'|})}{2{d'}^{|C'|}}\\
  &{}= \frac{\min(\|z_C\|^2 r(C) d'/8, e^{r(C)})}{2{d'}^{r(C)}} \\
  &{}\geq \frac{\min(\|z_C\|^2 |C| d/64, e^{|C|/4})}{2{d'}^{r(C)}}.
\end{align*}
\else
\begin{align*}
\Prob{C \text{ is satisfied by } x_u} &= \Prob{C' \text{ is satisfied by } x_u} \geq \frac{\min(\|z_C\|^2 |C'| d'/8, e^{|C'|})}{2{d'}^{|C'|}}\\
&= \frac{\min(\|z_C\|^2 r(C) d'/8, e^{r(C)})}{2{d'}^{r(C)}}\geq \frac{\min(\|z_C\|^2 |C| d/64, e^{|C|/4})}{2{d'}^{r(C)}}.
\end{align*}
\fi
Therefore, for every clause $C$,
\begin{align*}
\Prob{C \text{ is satisfied by } x_u} & \geq 
\Prob{C \text{ is satisfied by } x_u \;|\; C\text{ survives}} \Prob{C \text{ survives}} \\
& \geq 
\frac{\min(\|z_C\|^2 |C| d/64, e^{|C|/4})}{2{d'}^{r(C)}}
\times
\frac{\lceil d/2\rceil^{r(C)}}{d^{|C|}}
\\
&= \frac{\min(\|z_C\|^2 |C| d/64, e^{|C|/4})}{2d^{|C|}}.
\end{align*}
\qed
\end{proof}

Finally, we describe an algorithm for clauses $C$ with $r(C) \leq |C|/4$.
\begin{lemma}\label{lem:if-r-is-small}
There is a polynomial-time randomized rounding algorithm 
that given an \CSPd\ instance $\calI$ 
outputs an assignment $x_u$ such that every 
clause $C$ with $r(C) \leq |C|/4$ is satisfied with probability at least
$e^{|C|/8}/d^{|C|}$.
\end{lemma}
\begin{proof}
We do the following independently for every vertex $u\in X$. With probability
$3/4$, we choose $x_u$ uniformly at random from $L_u$; with probability 
$1/4$, we choose $x_u$ uniformly at random from $S_u$. The probability that a clause $C$ with $r(C) \leq |C|/4$ 
is satisfied equals 
\ifconf
\begin{align*}
\prod_{(u, i) \in C, i\in L_u} \frac{3}{4|L_u|} \prod_{(u, i) \in C, i\in S_u} \frac{1}{4|S_u|} & 
= \frac{1}{d^{|C|}} \cdot \left(\frac{3d}{4|L_u|}\right)^{|C|-r(C)} \left(\frac{d}{4|S_u|}\right)^{r(C)}\\
&\geq 
\frac{1}{d^{|C|}} \cdot \left(\frac{3d}{4|L_u|}\right)^{3|C|/4} \left(\frac{d}{4|S_u|}\right)^{|C|/4} 
\\&{} 
\geq
\frac{1}{d^{|C|}} \cdot \left(\left(\frac{3}{2}\right)^{3/4} \left(\frac{d}{2(d+1)}\right)^{1/4}\right)^{|C|}.
\end{align*}
\else
\begin{align*}
\prod_{(u, i) \in C, i\in L_u} \frac{3}{4|L_u|} & \prod_{(u, i) \in C, i\in S_u} \frac{1}{4|S_u|} 
= \frac{1}{d^{|C|}} \cdot \left(\frac{3d}{4|L_u|}\right)^{|C|-r(C)} \left(\frac{d}{4|S_u|}\right)^{r(C)}\\
&\geq 
\frac{1}{d^{|C|}} \cdot \left(\frac{3d}{4|L_u|}\right)^{3|C|/4} \left(\frac{d}{4|S_u|}\right)^{|C|/4} 
\geq
\frac{1}{d^{|C|}} \cdot \left(\left(\frac{3}{2}\right)^{3/4} \left(\frac{d}{2(d+1)}\right)^{1/4}\right)^{|C|}.
\end{align*}
\fi
Note that 
$\left(\frac{3}{2}\right)^{3/4} \left(\frac{d}{2(d+1)}\right)^{1/4} \geq 
\left(\frac{3}{2}\right)^{3/4} \left(\frac{113}{2\cdot 114}\right)^{1/4} \geq e^{1/8}$.
Therefore, the probability that the clause is satisfied is at least $e^{|C|/8}/d^{|C|}$.
\qed
\end{proof}

We run the algorithm from Lemma~\ref{lem:reduction} with probability $1/2$ and the algorithm from Lemma~\ref{lem:if-r-is-small}
with probability $1/2$. Consider a clause $C\in \calC$. If $r(C) \geq |C|/4$, we satisfy $C$ with probability at least
$\frac{\min(\|z_C\|^2 |C| d/64, e^{|C|/4})}{{4d^{|C|}}}$. If $r(C) \leq |C|/4$, we satisfy $C$ with probability at least
$e^{|C|/8}/(2d^{|C|})$. So we satisfy every clause $C$ with probability at least 
$\frac{\min(\|z_C\|^2 |C| d/64, \, 2e^{|C|/8})}{4d^{|C|}}$.
\qed
\end{proof}
\section{Approximation Algorithm for \texorpdfstring{\kCSPd}{k-CSP}}
In this section, we present the main result of the paper.

\begin{theorem}\label{thm:main}
There is a polynomial-time randomized approximation algorithm for \kCSPd\ that given 
an instance $\calI$ finds an assignment that satisfies at least $\Omega(\min(kd, e^{k/8})\, OPT(\calI)/d^k)$
clauses with constant probability. 
\end{theorem}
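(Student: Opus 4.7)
The plan is to solve the SDP relaxation, round the resulting solution using the algorithm of Lemma~\ref{lem:non-uniform-rounding}, and estimate the expected number of satisfied clauses. Recall that $SDP(\calI) = \sum_{C\in\calC} \|z_C\|^2 \geq OPT(\calI)$. Assuming $d \geq 113$ (smaller constant alphabets can be handled by a separate routine), for each clause $C$ of length $s = |C| \leq k$, Lemma~\ref{lem:non-uniform-rounding} yields
$$\Pr(C \text{ is satisfied}) \geq \frac{\min(\|z_C\|^2 \, s \, d / 64, \ 2 e^{s/8})}{4 d^s}.$$
The core task is to convert this length-dependent per-clause bound into a uniform lower bound of the form $\Omega(\|z_C\|^2 \cdot \min(kd, e^{k/8})/d^k)$ that can be summed over all clauses.

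First I would establish two monotonicity facts valid for every integer $1 \leq s \leq k$ when $d \geq 2$: $sd/d^s \geq kd/d^k$ and $e^{s/8}/d^s \geq e^{k/8}/d^k$. The first holds because $(sd/d^s)/((s{+}1)d/d^{s+1}) = sd/(s{+}1) \geq 1$; the second holds because $e^{s/8}/d^s = (e^{1/8}/d)^s$ is decreasing once $d > e^{1/8}$. I would also need $\|z_C\|^2 \leq 1$: the SDP identity $\langle u_i, z_C\rangle = \|z_C\|^2$ combined with Cauchy--Schwarz and $\|u_i\|^2 \leq \sum_j \|u_j\|^2 \leq 1$ forces $\|z_C\| \leq \|u_i\| \leq 1$.

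Next I would split on which term realizes the minimum. When $\|z_C\|^2 \, sd/64 \leq 2 e^{s/8}$, the bound from the lemma becomes $\|z_C\|^2 \, sd/(256 \, d^s)$, which by monotonicity is at least $\|z_C\|^2 \min(kd, e^{k/8})/(256 \, d^k)$. When $\|z_C\|^2 \, sd/64 > 2 e^{s/8}$, the bound becomes $e^{s/8}/(2 d^s) \geq e^{k/8}/(2 d^k) \geq \|z_C\|^2 \min(kd, e^{k/8})/(2 d^k)$, where the last step uses $\|z_C\|^2 \leq 1$ and $\min(kd, e^{k/8}) \leq e^{k/8}$. Either way $\Pr(C \text{ is satisfied}) = \Omega(\|z_C\|^2 \min(kd, e^{k/8})/d^k)$, and summing over clauses gives
$$\E{\text{\# satisfied clauses}} \geq \Omega\!\left(\frac{\min(kd, e^{k/8})}{d^k}\right) \sum_{C\in\calC} \|z_C\|^2 \geq \Omega\!\left(\frac{\min(kd, e^{k/8})\, OPT(\calI)}{d^k}\right).$$
The constant-probability conclusion then follows from a standard Markov-type argument applied to the number of unsatisfied clauses, together with independent repetition and returning the best assignment observed.

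The main obstacle is the length dependence of the per-clause guarantee. Replacing $d^s$ by $d^k$ naively would cost an extra $d^{k-s}$ factor for short clauses, destroying the $\min(kd, e^{k/8})$ improvement we are trying to preserve. The case split above, guided by which term in the $\min$ is tight, is exactly what keeps the $\|z_C\|^2$-dependent regime proportional to $kd$ while the length-saturated regime pays only through $\|z_C\|^2 \leq 1$ and remains proportional to $e^{k/8}$.
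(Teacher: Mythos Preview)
Your proposal is correct and matches the paper's proof: solve the SDP, apply Lemma~\ref{lem:non-uniform-rounding}, use the monotonicity of $s\mapsto s/d^s$ and $s\mapsto e^{s/8}/d^s$ together with $\|z_C\|\leq 1$ to convert the per-clause bound into $\Omega(\|z_C\|^2\min(kd,e^{k/8})/d^k)$, sum to get expected value $\Omega(\alpha\,OPT)$, and repeat to boost to constant probability. The only points the paper spells out that you leave implicit are the treatment of $d\leq 113$ via~\cite{CMM06} and the reduction to the case $kd/d^k\geq 1/|\calC|$ (otherwise satisfy a single clause), which ensures that the $O(1/\alpha)\leq d^k$ repetitions needed run in polynomial time; the paper also uses the bound $Z\leq OPT$ (rather than Markov on the unsatisfied count) to get per-trial success probability $\Omega(\alpha)$ directly.
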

\begin{proof}
If $d\leq 113$, we run the algorithm of Charikar, Makarychev and Maka\-rychev~\cite{CMM06} and get $\Omega(k/d^k)$
approximation. So we assume below that $d\geq 113$. We also assume that $kd/d^k \geq 1/|\calC|$,
as otherwise we just choose one clause from $\calC$ and find an assignment that satisfies it.
Thus $d^k$ is polynomial in the size of the input. 

We solve the SDP relaxation for the problem and run the rounding scheme from Lemma~\ref{lem:non-uniform-rounding}
$d^k$ times. We output the best of the obtained solutions.
By Lemma~\ref{lem:non-uniform-rounding}, each time we run the rounding scheme we get a solution with expected
value at least
\ifconf
\begin{multline*}
\sum_{C\in\calC}  \frac{\min(\|z_C\|^2 |C| d/64, 2e^{|C|/8})}{4d^{|C|}} \geq
\sum_{C\in\calC} \frac{\min(k d/64, 2e^{k/8})}{4d^{k}} \|z_C\|^2 \\
{}\geq
\frac{\min(kd/64, 2e^{k/8})}{4d^k} SDP(\calI)
\geq  \frac{\min(kd/64, 2e^{k/8})}{4d^k} OPT(\calI).
\end{multline*}
\else
\begin{align*}
\sum_{C\in\calC} \frac{\min(\|z_C\|^2 |C| d/64, 2e^{|C|/8})}{4d^{|C|}} &\geq
\sum_{C\in\calC} \frac{\min(k d/64, 2e^{k/8})}{4d^{k}} \|z_C\|^2\geq
\frac{\min(kd/64, 2e^{k/8})}{4d^k} SDP(\calI)\\
&\geq  \frac{\min(kd/64, 2e^{k/8})}{4d^k} OPT(\calI).
\end{align*}
\fi
Denote $\alpha = \frac{\min(kd/64, 2e^{k/8})}{4d^k}$. Let $Z$ be the random variable equal to the number of
satisfied clauses. Then $\E{Z} \geq \alpha OPT(\calI)$, and $Z \leq OPT(\calI)$ (always). Let $p = \Prob{Z\leq \alpha OPT(\calI)/2}$.
Then\
$$ p \cdot (\alpha OPT(\calI)/2) + (1 - p) \cdot OPT(\calI)  \geq \E{Z}\geq \alpha OPT(\calI).$$
So $p \leq \frac{1 - \alpha}{1-\alpha/2} = 1 - \frac{\alpha}{2-\alpha}$.
So with probability at least $1-p \geq \frac{\alpha}{2-\alpha}$, we find a solution of value at least $\alpha OPT(\calI)/2$ in one iteration.
Since we perform $d^k > 1/\alpha$ iterations, we find a solution of value at least $\alpha OPT(\calI)/2$ with constant probability.
\qed
\end{proof}

\appendix
\section{Proof of Lemma~\ref{lem:gaussian-scaling}}
In this section, we prove Lemma~\ref{lem:gaussian-scaling}.
We will use the following fact.
\begin{lemma}[see e.g.~\cite{CMMug}]\label{lem:NormalDistrib} 
For every $t > 0$,
$$  \frac{2t}{\sqrt{2\pi}\,(t^2+1)} e^{-\frac{t^2}{2}} < \bar{\Phi}(t) < \frac{2}{\sqrt{2\pi}\,t}
e^{-\frac{t^2}{2}}.$$
\end{lemma}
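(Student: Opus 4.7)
The plan is to reduce the two-sided Gaussian tail to the one-sided tail integral $I(t) := \int_t^\infty e^{-x^2/2}\,dx$. Since $\bar{\Phi}(t) = 2\Pr(\xi > t) = \frac{2}{\sqrt{2\pi}} I(t)$, the claimed bounds are equivalent (after canceling the common factor $2/\sqrt{2\pi}$) to
$$\frac{t}{t^2+1}\, e^{-t^2/2} \;<\; I(t) \;<\; \frac{1}{t}\, e^{-t^2/2},$$
so it suffices to establish these two inequalities on $I$.

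For the upper bound I would use a single integration by parts, writing $e^{-x^2/2} = \frac{1}{x}\cdot x e^{-x^2/2}$ with $u = 1/x$ and $dv = x e^{-x^2/2}\,dx$ (so $v = -e^{-x^2/2}$). The boundary term contributes exactly $\frac{1}{t}e^{-t^2/2}$ and the leftover term $\int_t^\infty x^{-2} e^{-x^2/2}\,dx$ is strictly positive and is being subtracted, giving $I(t) < \frac{1}{t}e^{-t^2/2}$ immediately. For the lower bound I would use a derivative comparison: set $g(t) = \frac{t}{t^2+1}\,e^{-t^2/2}$ and observe that both $g$ and $I$ tend to $0$ as $t \to \infty$, so to conclude $g(t) < I(t)$ on $(0,\infty)$ it suffices to show $g'(t) > I'(t) = -e^{-t^2/2}$ for every $t > 0$. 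A direct differentiation gives $g'(t) = \frac{1-2t^2-t^4}{(t^2+1)^2}\,e^{-t^2/2}$, so after clearing the positive factor $e^{-t^2/2}/(t^2+1)^2$ the inequality $g'(t) > -e^{-t^2/2}$ collapses to $1 - 2t^2 - t^4 > -(1 + 2t^2 + t^4)$, i.e.\ to the trivial $1 > -1$. Since $g - I$ is then strictly increasing on $(0,\infty)$ and vanishes at $+\infty$, we get $g(t) < I(t)$ for all $t > 0$.

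The main non-mechanical step is the choice of comparison function for the lower bound: the specific shape $t/(t^2+1)$ is dictated by the requirement that the $t^2$ and $t^4$ terms cancel symmetrically against those of $(t^2+1)^2$, leaving only the harmless constant mismatch. Once this function is in hand, both halves of the argument are essentially one-line calculus exercises, so I would present the proof in the order reduction, upper bound, lower bound, with the whole appendix amounting to no more than a few lines.
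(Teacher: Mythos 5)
Your proof is correct. Note that the paper does not actually prove this lemma --- it is stated as a known fact with a citation to~\cite{CMMug} --- so there is no in-paper argument to compare against. Your argument is the standard one for these Mills-ratio bounds: after reducing to the one-sided integral $I(t)=\int_t^\infty e^{-x^2/2}\,dx$, a single integration by parts gives the upper bound $I(t)<t^{-1}e^{-t^2/2}$ with a manifestly positive discarded term, and the lower bound follows from the monotone comparison of $g(t)=\frac{t}{t^2+1}e^{-t^2/2}$ with $I(t)$: both vanish at infinity, and $g'(t)+e^{-t^2/2}>0$ reduces, after clearing the factor $e^{-t^2/2}/(t^2+1)^2$, to $1-2t^2-t^4>-(t^2+1)^2$, which is the identity $2>0$. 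The computation of $g'$ checks out, and the limiting argument (strictly increasing difference tending to $0$ forces strict negativity at every finite $t$) is sound. This is a complete and appropriately short proof of the cited fact.
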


\noindent\textbf{Lemma~\ref{lem:gaussian-scaling}}.
\textit{For every $t > 0$ and $\beta \in (0,1]$, we have}
$$\bar{\Phi}(\beta t) \leq \bar{\Phi}(t)^{\beta^2}.$$
\begin{proof}
Rewrite the inequality we need to prove as follows:
$(\bar{\Phi}(\beta t))^{1/\beta^2} \leq \bar{\Phi}(t)$.
Denote the left hand side by $f(\beta,t)$:
$$f(\beta,t) = \bar{\Phi}(\beta t)^{1/\beta^2}.$$
We show that for every $t > 0$, $f(\beta, t)$ is strictly increasing function
as a function of $\beta\in (0, 1]$. Then,
$$(\bar{\Phi}(\beta t))^{1/\beta^2} = f(\beta) < f(1) = \bar{\Phi}(t).$$
We first prove that $\frac{\partial f(1,t)}{\partial \beta} > 0$.
Write,
$$\frac{\partial f(1,t)}{\partial \beta} = -2 \log (\bar{\Phi}(t))\, \bar{\Phi}(t)
+ t \bar{\Phi}'(t) =
-2 \log (\bar{\Phi}(t))\, \bar{\Phi}(t)
- \frac{2t\, e^{-t^2/2}}{\sqrt{2\pi}}.$$

Consider three cases. If $t\geq \sqrt{\frac{2e}{\pi}}$, then,
by Lemma~\ref{lem:NormalDistrib},
$$\bar{\Phi}(t) < \frac{2}{\sqrt{2\pi} t} e^{-t^2/2}\leq e^{-1/2}e^{-t^2/2}
= e^{-(t^2+1)/2}.$$
Hence, $-2 \log (\bar{\Phi}(t)) > (t^2 + 1)$, and by Lemma~\ref{lem:NormalDistrib},
$$-2 \log (\bar{\Phi}(t))\, \bar{\Phi}(t) >
(t^2 + 1)\, \bar{\Phi}(t) >
 \frac{2t\, e^{-t^2/2}}{\sqrt{2\pi}}.$$
If $t < \sqrt{\frac{2e}{\pi}}$, then let
$\rho(x) = -\log x / (1 - x)$ for $x\in (0,1)$ and write,
$$-\log \bar{\Phi}(t) = \rho(\bar{\Phi}(t)) \cdot (1 - \bar{\Phi}(t)) = \frac{\rho(\bar{\Phi}(t))}{\sqrt{2\pi}}\int_{-t}^t e^{-x^2/2} dx \geq
\frac{2\rho(\bar{\Phi}(t)) te^{-t^2/2}}{\sqrt{2\pi}}.$$
Hence,
$$\frac{\partial f(1,t)}{\partial \beta} =
-2 \log (\bar{\Phi}(t))\, \bar{\Phi}(t)
- \frac{2t\, e^{-t^2/2}}{\sqrt{2\pi}}
\geq \frac{2te^{-t^2/2}}{\sqrt{2\pi}} \times (2\rho(\bar{\Phi}(t)) \bar{\Phi}(t) - 1).$$

For $x\in [1/3,1]$, $2\rho(x) x >1$, since the function $\rho(x) x$ is increasing and 
$\rho(1/3) >3/2$. Hence $2\rho(\bar{\Phi}(t)) \bar{\Phi}(t) > 1$, if $\bar{\Phi}(t) \geq 1/3$.

The remaining case is $t < \sqrt{\frac{2e}{\pi}}$ and $\bar{\Phi}(t) < 1/3$. Then,
$\bar{\Phi}(t) \geq \bar{\Phi}(\sqrt{\frac{2e}{\pi}}) > 1/6$ and hence
$\bar{\Phi}(t)\in (1/6,1/3)$. Since the function $-x\log x$ is increasing on the interval
$(0,e^{-1})$,
$$-2 \log (\bar{\Phi}(t))\, \bar{\Phi}(t)> -2 \log (1/6)\cdot \frac{1}{6} > \frac{1}{2}.$$
The function $te^{-t^2/2}$ attains its maximum at $t=1$, thus
$$\frac{2t\, e^{-t^2/2}}{\sqrt{2\pi}}\leq \frac{2e^{-1/2}}{\sqrt{2\pi}} < \frac{1}{2}.$$

We get
$$\frac{\partial f(1,t)}{\partial \beta} =
-2 \log (\bar{\Phi}(t))\, \bar{\Phi}(t)
- \frac{2t\, e^{-t^2/2}}{\sqrt{2\pi}}>0.$$

Since $\frac{\partial f(1,t)}{\partial \beta} > 0$, for every $t > 0$, there exists
$\varepsilon_0 > 0$, such that for all $\varepsilon \in (0, \varepsilon_0)$,
$f(1-\varepsilon,t) < f(1,t)$. Particularly, for $t' = \beta t$,
$$f(\beta, t) = f(1, t')^{1/\beta^2} \geq f(1-\varepsilon, t')^{1/\beta^2}
= f((1-\varepsilon)\beta, t).$$
\qed
\end{proof}

\ifconf\else
\section{Improved Approximation Factor for Boolean Max \texorpdfstring{$k$}{k}-CSP} ~\label{sec:boolean-CSP}
In this section, we present an approximation algorithm for the boolean Maximum $k$-CSP problem, MAX $k$-CSP${}_2$.
The algorithm has approximation factor $0.626612\, k/2^k$ if $k$ is sufficiently large. This bound improves the previously best known bound 
of $0.44\, k/2^k$~\cite{CMM06} (if $k$ is sufficiently large).

Our algorithm is a slight modification of the algorithm for rounding uniform solutions of MAX $k$-CSP${}_d$.
We use the SDP relaxation presented in Section~\ref{sec:prelim}. 
Without loss of generality, we will assume below that all clauses have length exactly $k$. 
If a clause $C$ is shorter, we can introduce $k - |C|$ new variables and append them to $C$. 
This transformation will not change the value of the instance.

First, we describe a rounding scheme for an SDP solution $\set{u_1, u_2}_{u\in X} \cup  \set{z_C}_{C\in\calC}$. 
\begin{lemma} \label{lem:boolean-CSP-rounding}
There is a polynomial-time randomized rounding algorithm such that
for every clause $C\in \calC$ the probability that the  
algorithm satisfies $C$
is at least 
$$\frac{1}{2^k \sqrt{2\pi k}} \int_0^\infty h_{\beta}(t)^k dt, \text{ where } h_{\beta}(t)  = 2\Phi(\beta t)\, e^{-t^2/2},$$
and $\beta = \sqrt{k}\, \|z_C\|_2$. 
\end{lemma}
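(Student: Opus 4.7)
The plan is to parallel the proof of Lemma~\ref{lem:uniform-rounding}, with one essential change: instead of conditioning on the single event $\{|\gamma_C|>M\}$, I integrate out the value of $\gamma_C$ against its Gaussian density, and this is what produces the integral form $\int_0^\infty h_\beta(t)^k\,dt$. The rounding algorithm I will analyze is essentially the one from Lemma~\ref{lem:uniform-rounding} specialized to $d=2$: sample a standard Gaussian vector $g$ and set $x_u=\argmax_{i\in\{1,2\}}|\langle g, u_i\rangle|$. As preprocessing I will assume $\|u_1\|^2+\|u_2\|^2=1$ for every $u$, which can always be enforced by appending one dummy coordinate per $u$ and does not decrease the SDP value.

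Fix a clause $C$ and, by relabeling, assume $(u,1)\in C$ for every $u\in\supp(C)$. For each such $u$, write $u_1=z_C+u_1^{\perp}$ with $u_1^{\perp}\perp z_C$ (using $\langle u_1,z_C\rangle=\|z_C\|^2$), and note $u_2\perp z_C$ since $\langle u_2,z_C\rangle=0$. Define Gaussian variables $\gamma_C=\langle g,z_C\rangle$, $\gamma_{u,1}=\langle g,u_1^{\perp}\rangle$, and $\gamma_{u,2}=\langle g,u_2\rangle$. Then $\gamma_C$ is independent of the family $\{\gamma_{u,1},\gamma_{u,2}\}_u$, and within each $u$ the pair $(\gamma_{u,1},\gamma_{u,2})$ is itself independent (their covariance is $\langle u_1^{\perp},u_2\rangle=0$). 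The variances sum to $\Var[\gamma_{u,1}]+\Var[\gamma_{u,2}]=\|u_1\|^2-\|z_C\|^2+\|u_2\|^2=1-\|z_C\|^2$.

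The event ``$C$ is satisfied'' is $\{|\gamma_{u,1}+\gamma_C|>|\gamma_{u,2}|\;\forall u\}$. I condition on $\gamma_C=\sigma$ and write
$$\Prob{C \text{ is satisfied}}=\int_{-\infty}^{\infty}\Prob{|\gamma_{u,1}+\sigma|>|\gamma_{u,2}|\;\forall u}\,f_{\gamma_C}(\sigma)\,d\sigma,$$
with $f_{\gamma_C}$ the density of a Gaussian of variance $\|z_C\|^2$. I lower bound the inner probability by restricting to a symmetric rectangular sub-event of the form $\{|\gamma_{u,1}|\leq c|\sigma|,\;|\gamma_{u,2}|\leq (1-c)|\sigma|\}_u$, on which $|\gamma_{u,1}+\sigma|\geq (1-c)|\sigma|\geq|\gamma_{u,2}|$ automatically. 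This puts the event into the rectangular form required by \v{S}id\'ak's Theorem~\ref{thm:sidak} and factorizes the probability into a product of $2k$ univariate Gaussian probabilities of the form $\Phi(\cdot)$. Substituting $\sigma=\sqrt{k}\,\|z_C\|\,t$ converts $f_{\gamma_C}(\sigma)\,d\sigma$ into $\tfrac{\sqrt{k}}{\sqrt{2\pi}}e^{-kt^2/2}\,dt$, makes the argument of each $\Phi$ equal (up to absorbable constants) to $\beta t$ with $\beta=\sqrt{k}\,\|z_C\|$, and lets the factor $e^{-kt^2/2}=(e^{-t^2/2})^k$ be distributed one copy into each of the $k$ factors of $\Phi(\beta t)$. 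Matching prefactors produces the stated $\frac{1}{2^k\sqrt{2\pi k}}\int_0^\infty h_\beta(t)^k\,dt$; the factor of $2$ inside $h_\beta$ and the $\frac{1}{2^k}$ outside come from pairing up the $|\gamma_{u,1}|$ and $|\gamma_{u,2}|$ Sidak factors via the two-sided symmetry of $\Phi$ and from restricting the $\sigma$-integration to $\sigma>0$.

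The main obstacle will be calibrating the constants so that every factor folds cleanly into the final expression. The natural event $|\gamma_{u,1}+\sigma|>|\gamma_{u,2}|$ is a translated hyperboloid rather than a symmetric rectangle, so Theorem~\ref{thm:sidak} does not apply directly, and the sub-rectangle's constant $c$ must be chosen together with the normalization of the variances so that all constants absorb precisely into the factor of $2$ inside $h_\beta(t)$ and the prefactor $\frac{1}{2^k\sqrt{2\pi k}}$. Everything else---the orthogonal decomposition, the conditioning, the change of variables, and the Jacobian bookkeeping---is routine Gaussian calculus once the Sidak-compatible event is in place.
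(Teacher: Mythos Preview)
Your approach has a genuine gap: by carrying over the absolute-value rule $x_u=\argmax_i|\langle u_i,g\rangle|$ from the non-boolean case, you are forced into the two-Gaussian decomposition $\gamma_{u,1},\gamma_{u,2}$ and a rectangular sub-event $\{|\gamma_{u,1}|\le c|\sigma|,\ |\gamma_{u,2}|\le(1-c)|\sigma|\}$. After \v{S}id\'ak, the per-$u$ factor is $\Phi\!\bigl(c\beta t/\sqrt{v_1}\bigr)\,\Phi\!\bigl((1-c)\beta t/\sqrt{v_2}\bigr)$ with $v_1+v_2\le 1$, and this cannot be lower-bounded by $\Phi(\beta t)$ uniformly in $(v_1,v_2)$: for any fixed $c\in(0,1)$ the worst case (e.g.\ $v_1=0$, $v_2=1$, or $v_1=v_2=1/2$) yields only $\Phi\!\bigl(\min(c,1-c)\,\beta t\bigr)$, hence at best $\Phi(\beta t/2)$. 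That replaces $h_\beta$ by $h_{\beta/2}$ inside a $k$-th power, an exponential loss that no polynomial prefactor can repair; your ``calibrating the constants'' cannot close this. Your accounting for the $2$ in $h_\beta$ is also off: the $2^k$ and $1/2^k$ in the target cancel exactly, so nothing in the analysis needs to manufacture a spare factor of $2$ per variable.

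The fix is the paper's: drop the absolute values and use $x_u=\argmax_i\langle u_i,g\rangle$. Then ``$C$ satisfied'' is $\langle u_1-u_2,g\rangle>0$ for all $u$, which with the single shifted Gaussian $\gamma_u=\langle u_2-u_1+z_C,\,g\rangle$ becomes $\gamma_u<\gamma_C$; crucially $\Var[\gamma_u]=\|u_1\|^2+\|u_2\|^2-\|z_C\|^2\le 1$ by orthogonality and the SDP constraint. Conditioning on $\gamma_C$, the sub-event $\{|\gamma_u|<\gamma_C\ \forall u\}$ is already rectangular, \v{S}id\'ak applies directly, and each factor is $\ge\Phi(\beta t)$ with no loss in the argument. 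The change of variables $t=\gamma_C/\beta$ then gives the stated integral. In short, the boolean case exploits that $u_1-u_2$ has norm at most $1$, which collapses two vectors into one Gaussian of the right variance; this is precisely what your absolute-value rule throws away.
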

\begin{proof}
We round the SDP solution as follows.

\rule{0pt}{12pt}
\hrule height 0.8pt
\rule{0pt}{1pt}
\hrule height 0.4pt
\rule{0pt}{6pt}

\noindent \textbf{SDP Rounding Scheme for MAX $k$-CSP${}_2$}
\medskip

\noindent \textbf{Input:} an instance of MAX $k$-CSP${}_2$ and an SDP solution.

\noindent \textbf{Output:} an assignment $\{x_u\}$.
\begin{itemize}\compactify
\item Choose a random Gaussian vector $g$ so that every component of $g$ is distributed as a Gaussian variable with mean 0 and variance 1,
and all components are independent.
\item For every $u\in V$, let $x_u= \argmax_i \langle u_i, g\rangle$. 
\end{itemize}
\rule{0pt}{1pt}
\hrule height 0.4pt
\rule{0pt}{1pt}
\hrule height 0.8pt
\rule{0pt}{12pt}

Consider a clause $C\in \calC$. We assume without loss of generality that $C = \set{(u,1): u\in \supp(C)}$.
Let $\gamma_C = \langle z_C, g\rangle$ and $\gamma_u = \langle u_2 - u_1 + z_C, g\rangle$ for $u\in \supp(C)$.
Note that for $u\in \supp(C)$,
\begin{align*}
\Var[\gamma_C] &= \|z_C\|^2 = \beta^2/k,\\
\Var[\gamma_u] &= \|u_2 - u_1 + z_C\|^2 = \|u_1\| + \|u_2\|^2 + \|z_C\|^2 - 2\langle u_1, z_C\rangle = \|u_1\| + \|u_2\|^2 - \|z_C\|^2 \leq 1,\\
\E{\gamma_C \gamma_u} &= \langle z_C, u_2 - u_1 + z_C\rangle = \langle z_C, u_2 \rangle - \langle z_C, u_1 \rangle + \langle z_C, z_C \rangle = 
 0 - \|z_C\|^2 + \|z_C\|^2 = 0.
\end{align*}
Therefore, all random variables $\gamma_u$, for $u\in\supp C$, are independent from $\gamma_C$.
The probability that $C$ is satisfied equals
\begin{align*}
\Prob{C \text{ is satisfied}} &= \Prob{\langle u_1, g\rangle > \langle u_2, g\rangle \text{ for every } u\in \supp(C)}\\
&=\Prob{\gamma_C > \gamma_u      \text{ for every } u\in \supp(C)} \geq \Prob{|\gamma_u| < \gamma_C  \text{ for every } u\in \supp(C)}\\
&=\EE{\gamma_C}{\Prob{|\gamma_u| \leq \gamma_C \text{ for every } u\in \supp(C)\; | \; \gamma_C}} \\
& \stackrel{\text{let } t = \gamma_C/\beta}{=} 
\frac{1}{\sqrt{2\pi k}}\int_{t=0}^\infty \Prob{|\gamma_u| \leq \beta t \text{ for every } u\in \supp(C)} e^{-k t^2/2} dt.
\end{align*} 
We use here that $\Var[\gamma_C/\beta] = 1/k$. 
By \v{S}id\'{a}k's Theorem (Theorem~\ref{thm:sidak}), we have
\begin{align*}
\Prob{|\gamma_u| \leq \beta t \text{ for every } u\in \supp(C)} &\geq \prod_{u\in \supp(C)}
\Prob{|\gamma_u| \leq \beta t} 
= \prod_{u\in \supp(C)}  \Phi(\beta t/\sqrt{\Var[\gamma_u]}) \\
{}& \geq \prod_{u\in \supp(C)} \Phi(\beta t) = \Phi(\beta t)^k.
\end{align*}
We conclude that 
$$\Prob{C \text{ is satisfied}} \geq \frac{1}{2^k \sqrt{2\pi k}} \int_0^\infty h_{\beta}(t)^k dt.$$
\qed
\end{proof}

Let $g(\beta) = \max_{t\in\mathbb{R}} h_\beta(t)$ ($h_\beta$(t) attains its maximum since $h_\beta(t)\to 0$ as $t\to\infty$).
Note that $g(\beta)$ is an increasing function since $h_\beta(t)$ is an increasing function of $\beta$ for every fixed $t$. 
Additionally, $g(0) = 0$ and $\lim_{\beta \to \infty} g(\beta) = 2$ since $f(\beta, 1/\sqrt{\beta}) = 2\Phi(\sqrt\beta)e^{-1/(2\beta)} \to 2$
as $\beta \to \infty$, and for every $\beta$ and $t$, $f(\beta, t) \leq 2$. Therefore, $\beta^{-1}$ is defined on $[0, 2)$.
Let $\beta_0 = g^{-1}(1)$. It is easy to check numerically that $\beta_0 \in (1.263282, 1.263283)$. 
\begin{figure}
\begin{center}
\includegraphics[scale=0.75]{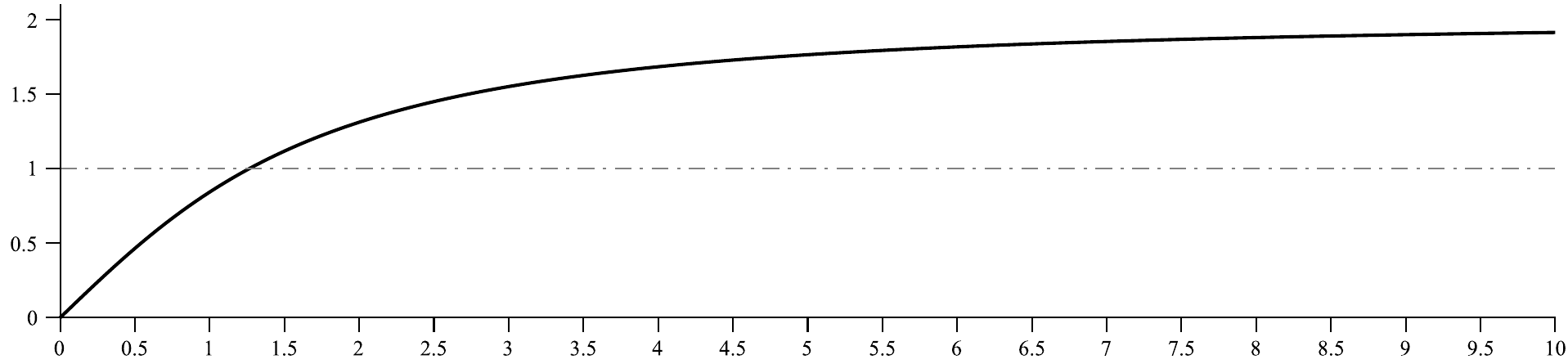}
\end{center}
\caption{The figure shows the graph of $g(t)$. We note that $g(t) > 1$ when $t > \beta_0 \approx 1.263282$. }
\end{figure}

\begin{claim} \label{claim:boolean-success}
For every $\beta > \beta_0$ there exists $k_0$ (which depends only on $\beta$)
such that if $k \geq k_0$ and $\|z_C\| \geq \beta/\sqrt{k}$ then the probability that the algorithm 
from Lemma~\ref{lem:boolean-CSP-rounding} returns an assignment that
satisfies $C$ is at least $k^2/2^{k}$. 
\end{claim}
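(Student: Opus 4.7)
The plan is to produce an exponentially growing (in $k$) lower bound for the integral in Lemma~\ref{lem:boolean-CSP-rounding}. Because $\beta > \beta_0$ forces $g(\beta) > 1$, there is a fixed (depending only on $\beta$) interval on which $h_\beta(t)$ is bounded strictly above $1$, so $h_\beta(t)^k$ grows exponentially there, which easily swamps the polynomial threshold $k^2\sqrt{2\pi k}$ we must beat.

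First I would align parameters. Lemma~\ref{lem:boolean-CSP-rounding} is phrased with $\beta' := \sqrt{k}\,\|z_C\|$, which by hypothesis satisfies $\beta' \geq \beta$; since $\Phi$ is increasing on $[0,\infty)$, $h_{\beta'}(t) \geq h_\beta(t)$ for every $t \geq 0$, hence
$$\Prob{C\text{ is satisfied}} \;\geq\; \frac{1}{2^k\sqrt{2\pi k}}\int_0^\infty h_\beta(t)^k\,dt,$$
and it suffices to show that $\int_0^\infty h_\beta(t)^k\,dt \geq k^2\sqrt{2\pi k}$ once $k$ is large enough in terms of $\beta$.

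Next, I would localize around the maximizer of $h_\beta$. Setting $c := g(\beta)$, strict monotonicity of $g$ together with $g(\beta_0) = 1$ gives $c > 1$, while $h_\beta(0) = 2\Phi(0) = 0 < c$ and $h_\beta(t) \to 0$ as $t \to \infty$ force the maximizer $t^*$ to lie strictly inside $(0,\infty)$. By continuity of $h_\beta$ there is a $\delta > 0$, depending only on $\beta$, with $(t^* - \delta, t^* + \delta) \subset (0,\infty)$ and $h_\beta(t) \geq (1+c)/2$ on that interval, so
$$\int_0^\infty h_\beta(t)^k\,dt \;\geq\; 2\delta\left(\frac{1+c}{2}\right)^{k}.$$
Since $(1+c)/2 > 1$ and $c, \delta$ depend only on $\beta$, the right-hand side exceeds $k^2\sqrt{2\pi k}$ for all $k \geq k_0(\beta)$, which proves the claim. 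The only nonroutine point is verifying that $t^*$ is an interior maximum so we have room for a two-sided window, which is automatic from $h_\beta(0) = 0 < 1 < c$; a Laplace-type expansion at $t^*$ would sharpen the estimate to $g(\beta)^k \cdot \Theta(k^{-1/2})$, but the crude interval bound is more than enough for the polynomial gap we need to cross.
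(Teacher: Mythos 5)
Your proof is correct and follows essentially the same route as the paper: both lower-bound the integral $\int_0^\infty h_\beta(t)^k\,dt$ by (measure of a set where $h_\beta$ exceeds $(1+g(\beta))/2>1$) times that value to the $k$-th power, and then let the exponential beat the polynomial threshold $k^2\sqrt{2\pi k}$. Your version is if anything slightly more explicit than the paper's, in spelling out the monotonicity step from $\sqrt{k}\,\|z_C\|$ down to $\beta$ and in verifying the maximizer of $h_\beta$ is interior.
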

\begin{proof}
Let $\varepsilon_1 = (g(\beta) - 1)/2 > 0$. Let $\varepsilon_2$ be the measure of the set
$\set{t:h_\beta(t) > 1 + \varepsilon_1}$. Since $h_\beta(t)$ is continuous, $\varepsilon_2 > 0$.

The probability that $C$ is satisfied is at least
$$\frac{1}{2^k \sqrt{2\pi k}} \int_0^\infty h_{\beta}(t)^k dt\geq \frac{\varepsilon_2 (1+\varepsilon_1)^k}{2^k \sqrt{2\pi k}}.$$
We choose $k_0$ so that for every $k \geq k_0$
$$\varepsilon_2 (1+\varepsilon_1)^{k} \geq \sqrt{2\pi k} \cdot k^2.$$
Then if $k \geq k_0$ the probability that the clause is satisfied is at least $k^2/2^{k}$.
\qed
\end{proof}

Now we are ready to describe our algorithm. 
\begin{theorem} There is a randomized approximation algorithm for the boolean MAX  $k$-CSP problem with approximation guarantee 
$\alpha_k k/2^k$ where $\alpha_k \to \alpha_0 \geq 0.626612$ as $k\to \infty$ and $\alpha_0 = 1/\beta_0^2$. 
(Here, as above, $\beta_0$ is the solution of the equation $g(\beta) = 1$ where $g(\beta) = \max_{t\in \mathbb R} 2\Phi(\beta t) e^{-t^2/2}$.)
\end{theorem}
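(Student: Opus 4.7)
The plan is to combine the SDP rounding of Lemma~\ref{lem:boolean-CSP-rounding} with a uniformly random assignment in a randomized hybrid. Fix a small $\varepsilon>0$, let $\beta = \beta_0 + \varepsilon$, and set $p_k = k^{-1/2}$. The algorithm runs the SDP rounding scheme with probability $p_k$ and returns a uniformly random assignment with probability $1 - p_k$. I classify each clause $C$ as \emph{heavy} if $\|z_C\| \geq \beta/\sqrt{k}$ and \emph{light} otherwise.

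For a heavy clause, $\sqrt{k}\,\|z_C\|\geq \beta > \beta_0$, so Claim~\ref{claim:boolean-success} gives that the SDP rounding satisfies $C$ with probability at least $k^2/2^k$ once $k$ is large enough. The SDP constraints imply $\|z_C\|^2 = \langle u_i, z_C\rangle \leq \|u_i\|\,\|z_C\| \leq \|z_C\|$ (since $\|u_i\|\leq 1$), hence $\|z_C\|^2 \leq 1$, and the hybrid therefore satisfies $C$ with probability at least $p_k k^2/2^k \geq \alpha_k k\|z_C\|^2/2^k$ for any fixed $\alpha_k$ once $k$ is large. For a light clause we have $\|z_C\|^2 < \beta^2/k$, and the uniformly random branch alone satisfies $C$ with probability $1/2^k$, so the hybrid satisfies it with probability at least $(1-p_k)/2^k$, which exceeds $\alpha_k k\|z_C\|^2/2^k$ as long as $\alpha_k \leq (1-p_k)/\beta^2$. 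Setting $\alpha_k := (1-p_k)/\beta^2$ secures both bounds simultaneously, so every clause $C$ is satisfied with probability at least $\alpha_k k \|z_C\|^2/2^k$. Summing over all clauses yields expected value at least $(\alpha_k k/2^k)\sum_C \|z_C\|^2 \geq (\alpha_k k/2^k)\,OPT(\calI)$, and a polynomial number of independent repetitions converts this into a constant-probability guarantee, exactly as at the end of the proof of Theorem~\ref{thm:main}.

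Letting $\varepsilon\downarrow 0$ and $k\to\infty$ drives $\alpha_k$ to $1/\beta_0^2 = \alpha_0$, and a numerical evaluation of the equation $g(\beta) = 1$ (depicted in the figure) places $\beta_0 \in (1.263282, 1.263283)$, which gives $\alpha_0 \geq 0.626612$. The only delicate choice is the schedule for $p_k$: it must vanish so that $(1-p_k)/\beta^2 \to 1/\beta_0^2$, yet satisfy $p_k k \to \infty$ so that $p_k k^2/2^k$ dominates $\alpha_k k/2^k$ on heavy clauses. This gap is very wide, so any $p_k$ with both properties (e.g., $p_k = k^{-1/2}$) works, and the heart of the proof is really the per-clause analysis already supplied by Lemma~\ref{lem:boolean-CSP-rounding} and Claim~\ref{claim:boolean-success}.
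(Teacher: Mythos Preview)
Your proof is correct and follows essentially the same approach as the paper: a randomized hybrid of the SDP rounding from Lemma~\ref{lem:boolean-CSP-rounding} and a uniformly random assignment, with a heavy/light case split at $\|z_C\| = \beta/\sqrt{k}$ for some $\beta>\beta_0$, invoking Claim~\ref{claim:boolean-success} on heavy clauses and the trivial $1/2^k$ bound on light ones. The only differences are cosmetic parameter choices --- the paper takes $p=1/k$ and picks $\beta$ as a function of a target $\alpha<\alpha_0$, whereas you take $p_k=k^{-1/2}$ and fix $\beta=\beta_0+\varepsilon$ --- and as you note, any $p_k$ with $p_k\to 0$ and $p_k k\to\infty$ works; your phrasing ``letting $\varepsilon\downarrow 0$ and $k\to\infty$'' is informally a double limit, but it unpacks to exactly the paper's statement that for every $\alpha<\alpha_0$ the bound $\alpha k/2^k$ holds for all sufficiently large $k$.
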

\begin{proof}
The algorithm with probability $p = 1/k$ rounds the SDP solution
as described in Lemma~\ref{lem:boolean-CSP-rounding}, with probability $1-p$, it choses a completely random solution. 

Let $\alpha < \alpha_0$. We will show that if $k$ is large enough, every clause is satisfied with probability at least
$\alpha k/2^k$. 
Let $\beta = (\beta_0 + \alpha^{-1/2}) / 2 \in (\beta_0, \alpha^{-1/2})$. Let $k_0$ be as in Claim~\ref{claim:boolean-success}.
Suppose that $k \geq \max(k_0, (1 - \alpha\beta^2)^{-1})$. 

Consider a clause $C$. 
We show that the algorithm satisfies $C$ with probability at least 
$\frac{\alpha\|z_C\|^2 k}{2^k}$. Indeed, we have:
\begin{itemize}
\item If $\|z_C\| < \beta/\sqrt{k}$, the clause is satisfied  with probability at least
$(1-p)/2^k \geq \frac{(1-p)k \,\|z_C\|^2}{\beta^2 2^k} \geq \alpha\|z_C\|^2 k/2^k$.
\item If $\|z_C\| \geq \beta/\sqrt{k}$, the clause is satisfied  with probability at least 
$p\cdot k^2 /2^k = k/2^k \geq k\,\|z_C\|^2/2^k$.
\end{itemize}

We conclude that the algorithm finds a solution that satisfies at least 
$$\frac{\alpha k}{2^k} \sum_{C\in \calC} \,\|z_C\|^2 = \frac{\alpha k}{2^k} \cdot SDP \geq \frac{\alpha k}{2^k} \cdot OPT$$
clauses in expectation. By running this algorithm polynomially many times (as we do in Theorem~\ref{thm:main}) we can find a solution
of value at least $\alpha' k\, OPT/2^k$ for every constant $\alpha' < \alpha$ w.h.p.
\qed
\end{proof}
\fi
\end{document}